\newcommand{\bN}{{\mathbb N}}
\newcommand{\ramsey}{\to}
\newcommand{\lex}{{\mathrm{lex}}}
\newcommand{\pref}{{\mathrm{pref}}}
\newcommand{\supp}{{\mathrm{supp}}}
\newcommand{\alephN}{{\aleph_0}}
\newcommand{\cont}{{2^\alephN}}
\newcommand{\oCF}{\omega\mathrm{CF}}
\newcommand{\ocCF}{\mathrm{co}\text{-}\omega\mathrm{CF}}
\newcommand{\erCF}{\omega\mathrm{erCF}}
\newcommand{\cerCF}{\mathrm{co}\text{-}\omega\mathrm{erCF}}
\newcommand{\LANG}{\mathrm{LANG}}
\newcommand{\oLANG}{\omega\mathrm{LANG}}
\newcommand{\REG}{\mathrm{REG}}
\newcommand{\oREG}{\omega\mathrm{REG}}
\begin{document}

\title[Is Ramsey's theorem $\omega$-automatic?]{Is Ramsey's theorem $\omega$-automatic?}

\author{Dietrich Kuske}{Dietrich Kuske} 

\address{Centre national de la recherche scientifique (CNRS)
  and\newline Laboratoire Bordelais de Recherche en Informatique
  (LaBRI), Bordeaux, France}
\thanks{These results were obtained when the author was affiliated
  with the Universit\"at Leipzig.} 

\keywords{Logic in computer science, Automata, Ramsey theory}
\subjclass{F.4.1}


\begin{abstract}
  \noindent We study the existence of infinite cliques in
  $\omega$-automatic (hyper-)graphs. It turns out that the situation
  is much nicer than in general uncountable graphs, but not as nice as
  for automatic graphs.

  More specifically, we show that every uncountable $\omega$-automatic
  graph contains an uncountable co-context-free clique or anticlique,
  but not necessarily a context-free (let alone regular) clique or
  anticlique. We also show that uncountable $\omega$-automatic ternary
  hypergraphs need not have uncountable cliques or anticliques at all.
\end{abstract}

\maketitle

\section*{Introduction}\label{S:one}
Every infinite graph has an infinite clique or an infinite anticlique
-- this is the paradigmatic formulation of Ramsey's theorem
\cite{Ram30}. But this theorem is highly non-construc\-tive since there
are recursive infinite graphs whose infinite cliques and anticliques
are all non-recursive (not even in $\Sigma^0_2$, \cite{Joc72}, cf.\
\cite[Thm.~4.6]{Gas98}). Recall that a graph is recursive if both its
set of nodes and its set of edges can be decided by a Turing
machine. Replacing these Turing machines by finite automata, one
obtains the more restrictive notion of an \emph{automatic graph}: the
set of nodes is a regular set and whether a pair of nodes forms an
edge can be decided by a synchronous two-tape automaton (this concept
is known since the beginning of automata theory, a systematic study
started with \cite{KhoN95,BluG04}, see \cite{Rub08} for a recent
overview). In this context, the situation is much more favourable:
every infinite automatic graph contains an infinite regular clique or
an infinite regular anticlique (cf.~\cite{Rub08}).

Soon after Ramsey's paper from 1930, authors got interested in a
quantitative analysis. For finite graphs, one can ask for the minimal
number of nodes that guarantee the existence of a clique or anticlique
of some prescribed size. This also makes sense in the infinite: how
many nodes are necessary and sufficient to obtain a clique or
anticlique of size $\alephN$ (Ramsey's theorem tells us: $\alephN$) or
$\aleph_1$ (here one needs more than $\cont$
nodes~\cite{Sie33,ErdR56}).

Since automatic graphs contain at most $\alephN$ nodes, we need a more
general notion for a recursion-theoretic analysis of this
situation. For this, we use Blumensath \& Gr\"adel's \cite{BluG04}
$\omega$-automatic graphs: the names of nodes form a regular
$\omega$-language and the edge relation (on names) as well as the
relation ``these two names denote the same node'' can be decided by a
synchronous 2-tape B\"uchi-automaton. In this paper, we answer the
question whether these $\omega$-automatic graphs are more like
automatic graphs (i.e., large cliques or anticliques with nice
properties exist) or like general graphs (large cliques need not
exist).

Our answer to this question is a clear ``somewhere in between'': We
show that every $\omega$-automatic graph of size $\cont$ contains a
clique or anticlique of size $\cont$ (Theorem~\ref{T-uncountable}) --
this is in contrast to the case of arbitrary graphs where such a
subgraph need not exist~\cite{Sie33}. But in general, there is no
regular clique or anticlique (Theorem~\ref{T-complexity}) -- this is in
contrast with the case of automatic graphs where we always find a
large regular clique or anticlique. Finally, we also provide an
$\omega$-automatic ``ternary hypergraph'' of size $\cont$ without any
clique or anticlique of size~$\aleph_1$, let alone~$\cont$
(Theorem~\ref{T-S2}).

For Theorem~\ref{T-uncountable}, we re-use the proof
from~\cite{BarKR08} that was originally constructed to deal with
infinity quantifiers in $\omega$-automatic structures. The proof of
Theorem~\ref{T-complexity} makes use of the ``ultimately equal''
relation. This relation was also crucial in the separation of
injectively from general $\omega$-automatic structures~\cite{HjoKMN08}
as well as in the handling of infinity quantifiers in \cite{KusL08}
and \cite{BarKR08}. In the ternary hypergraph from Theorem~\ref{T-S2},
a 3-set $\{x,y,z\}$ of infinite words with $x<_\lex y<_\lex z$ forms
an undirected hyperedge iff the longest common prefix of $x$ and $y$
is shorter than the longest common prefix of $y$ and~$z$.

>From Theorem~\ref{T-uncountable} (i.e., the existence of large cliques
or anticliques in $\omega$-automatic graphs), we derive that any
$\omega$-automatic partial order of size~$\cont$ contains an antichain
of size~$\cont$ or a copy of the real line. 

\section{Preliminaries}

\subsection{Ramsey-theory}

For a set $V$ and a natural number $k\ge1$, let $[V]^k$ denote the set
of $k$-element subsets of~$V$. A \emph{$(k,\ell)$-partition} is a pair
$G=(V,E_1,\dots,E_\ell)$ where $V$ is a set and $(E_1,\dots,E_\ell)$
is a partition of $[V]^k$ into (possibly empty) sets. For $1\le
i\le\ell$, a set $W\subseteq V$ is \emph{$E_i$-homogeneous} if
$[W]^k\subseteq E_i$; it is \emph{homogeneous} if it is
$E_i$-homogeneous for some $1\le i\le \ell$. The case $k=\ell=2$ is
special: any $(2,2)$-partition $G=(V,E_1,E_2)$ can be considered as an
(undirected loop-free) graph~$(V,E_1)$. Homogeneous sets in $G$ are
then complete or discrete induced subgraphs of~$(V,E_1)$.

Ramsey theory is concerned with the following question: Does every
$(k,\ell)$-partition $G=(V,E_1,\dots,E_\ell)$ with $|V|=\kappa$ have a
homogeneous set of size~$\lambda$ (where $\kappa$ and $\lambda$ are
cardinal numbers and $k,\ell\ge2$ are natural numbers). If this is the
case, one writes
\[
   \kappa\to(\lambda)^k_\ell
\]
(a notation due to Erd\H os and Rado \cite{ErdR56}). This allows to
formulate Ramsey's theorem concisely:

\begin{theorem}[Ramsey \cite{Ram30}]
  If $k,\ell\ge 2$, then $\alephN\to(\alephN)^k_\ell$.
\end{theorem}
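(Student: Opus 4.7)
The plan is to proceed by induction on $k\ge 2$, keeping $\ell$ arbitrary throughout. The core idea in each case is the classical ``shrinking'' construction: I build simultaneously a sequence $v_0,v_1,\ldots$ of distinct vertices and a decreasing chain $V=V_0\supseteq V_1\supseteq V_2\supseteq\cdots$ of infinite sets with $v_n\in V_n\setminus V_{n+1}$, arranged so that the color of any $k$-subset containing $v_n$ and otherwise lying in $V_{n+1}$ depends only on $n$. A final pigeonhole on these per-stage colors then yields the homogeneous set.

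For the base case $k=2$, the recursion is: given an infinite $V_n$, pick any $v_n\in V_n$ and partition $V_n\setminus\{v_n\}$ according to the color of $\{v_n,w\}$ as $w$ ranges over it. Since $\ell$ is finite, at least one block is infinite; take $V_{n+1}$ to be such a block and record its color as $c(n)\in\{1,\ldots,\ell\}$. A second pigeonhole produces a color $i$ with $A=\{n:c(n)=i\}$ infinite, and then $W=\{v_n:n\in A\}$ is $E_i$-homogeneous: for $n<m$ in $A$ one has $v_m\in V_m\subseteq V_{n+1}$, so $\{v_n,v_m\}\in E_{c(n)}=E_i$ by construction.

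For the induction step from $k$ to $k+1$, only the choice of $V_{n+1}$ changes. Given an infinite $V_n$ and $v_n\in V_n$, consider on $V_n\setminus\{v_n\}$ the induced $(k,\ell)$-partition with classes $E'_i=\{A\in[V_n\setminus\{v_n\}]^k : \{v_n\}\cup A\in E_i\}$ and apply the inductive hypothesis to obtain an infinite homogeneous subset, which I take as $V_{n+1}$ (with color $c(n)$). The same pigeonhole-and-extract step as before produces $W=\{v_n:c(n)=i\}$, and this $W$ is $E_i$-homogeneous because any $(k+1)$-subset written in order as $\{v_{n_0},\ldots,v_{n_k}\}$ with $n_0<\cdots<n_k$ satisfies $\{v_{n_1},\ldots,v_{n_k}\}\in[V_{n_0+1}]^k$, hence $\{v_{n_0},v_{n_1},\ldots,v_{n_k}\}\in E_{c(n_0)}=E_i$.

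The only point requiring care is that at every stage $V_{n+1}$ be genuinely infinite, so the recursion does not stall; this is guaranteed by finiteness of $\ell$ in the base case and by the inductive hypothesis in the step. No other obstacle arises --- once the shrinking construction is set up, the theorem is essentially bookkeeping.
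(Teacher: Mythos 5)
Your proof is correct: it is the standard induction-on-$k$ argument for the infinite Ramsey theorem (build a shrinking chain of infinite sets using the inductive hypothesis at each stage, then pigeonhole on the per-stage colors), and every step --- the distinctness of the $v_n$, the infinitude of each $V_{n+1}$, and the final homogeneity check via $v_{n_1},\dots,v_{n_k}\in V_{n_0+1}$ --- goes through. The paper states this theorem only as a cited classical result of Ramsey and gives no proof of its own, so there is nothing to compare against; your argument is the expected one.
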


In particular, every graph with~$\alephN$ nodes contains a complete or
discrete induced subgraph of the same size. If one wants to find
homogeneous sets of size $\aleph_1$, the base set has to be much
larger:

\begin{theorem}[Sierpi\'nski \cite{Sie33}]\label{T-S}
  If $k,\ell\ge2$, then $\cont\not\to(\aleph_1)^k_\ell$ and therefore
  in particular $\cont\not\to(\cont)^k_\ell$.
\end{theorem}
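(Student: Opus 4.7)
My plan is to establish the case $k=\ell=2$ directly and then extend to all $k,\ell\ge 2$ by a padding argument.

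For the base case, I take $V=\{0,1\}^\omega$, of cardinality $\cont$, and equip it with two orders: the lexicographic order $<_\lex$ and an arbitrary well-ordering $\prec$ (available by the axiom of choice). I define the coloring $c\colon [V]^2\to\{1,2\}$ by $c(\{x,y\})=1$ if $<_\lex$ and $\prec$ agree on $\{x,y\}$, and $c(\{x,y\})=2$ otherwise. Any $W\subseteq V$ that is $c$-homogeneous of color~$1$ (resp.~$2$) is well-ordered (resp.~anti-well-ordered) by $<_\lex$, so if $|W|\ge\aleph_1$ the linear order $(V,<_\lex)$ would contain a strictly monotone sequence of length~$\omega_1$.

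The crucial step is to rule this out. The map $x\mapsto\sum_{i\ge 0}2x_i/3^{i+1}$ is a strict order-embedding of $(V,<_\lex)$ into $(\mathbb{R},<)$, so it suffices to show that no strictly monotone $\omega_1$-sequence exists in~$\mathbb{R}$. This is the standard separability argument: a strictly increasing sequence $(r_\alpha)_{\alpha<\omega_1}$ in $\mathbb{R}$ determines an uncountable family of pairwise disjoint nonempty open intervals $(r_\alpha,r_{\alpha+1})$, which is impossible since $\mathbb{Q}$ is a countable dense subset; the decreasing case is symmetric.

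For general $k,\ell\ge 2$ I lift $c$ to $c'\colon [V]^k\to\{1,\dots,\ell\}$ by $c'(\{v_1,\dots,v_k\})=c(\{v_1,v_2\})$, where $v_1\prec\dots\prec v_k$ is the $\prec$-sorted listing (so colors $3,\dots,\ell$ stay empty). A $c'$-homogeneous $W$ with $|W|\ge\aleph_1$ necessarily has color in $\{1,2\}$; passing if necessary to an initial segment of $(W,\prec)$ of order type $\omega_1$, for any $\{a,b\}\in[W]^2$ with $a\prec b$ there are uncountably many $\prec$-successors of $b$ in $W$, so I can choose $v_3,\dots,v_k$ above $b$ and obtain $c(\{a,b\})=c'(\{a,b,v_3,\dots,v_k\})=i$. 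Hence $W$ is $c$-homogeneous, contradicting the base case. The only non-formal step is the separability argument, which is a one-line counting observation; the rest is a clean interplay between the lexicographic and well-orderings of $\{0,1\}^\omega$.
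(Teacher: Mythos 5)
Your proof is correct. The paper does not prove this theorem at all -- it is quoted as a classical result with a citation to Sierpi\'nski's 1933 paper -- so there is no internal argument to compare against. What you give is exactly the standard Sierpi\'nski construction: colour pairs of $\{0,1\}^\omega$ according to whether a well-ordering $\prec$ agrees with $<_\lex$, observe that a colour-homogeneous set of size $\aleph_1$ would yield a strictly monotone $\omega_1$-sequence in $(\{0,1\}^\omega,<_\lex)$, embed this order into $(\mathbb R,<)$ via the Cantor map, and kill the sequence by the disjoint-intervals/separability count. The lift to general $k,\ell\ge2$ by reading off the colour of the two $\prec$-least elements is also sound: after passing to an initial segment of $(W,\prec)$ of order type $\omega_1$, every element has uncountably many $\prec$-successors in $W$, so the required $k-2$ auxiliary points always exist and the $k$-homogeneity of $W$ collapses to $2$-homogeneity. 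The derived statement $\cont\not\to(\cont)^k_\ell$ is immediate since $\cont\ge\aleph_1$.
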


Erd\H os and Rado \cite{ErdR56} proved that partitions of size
properly larger than $\cont$ have homogeneous sets of size
$\aleph_1$. For more details on infinite Ramsey theory,
see~\cite[Chapter~9]{Jec02}.

\subsection{$\omega$-languages}

Let $\Gamma$ be a finite alphabet. With $\Gamma^*$ we denote the set
of all finite words over the alphabet~$\Gamma$. The set of all
nonempty finite words is $\Gamma^+$.  An \emph{$\omega$-word} over
$\Gamma$ is an infinite $\omega$-sequence $x=a_0 a_1 a_2 \cdots$ with
$a_i \in \Gamma$, we set $x[i,j)=a_ia_{i+1}\dots a_{j-1}$ for natural
numbers $i\le j$. In the same spirit, $x[i,\omega)$ denotes the
$\omega$-word $a_ia_{i+1}\dots$. The set of all $\omega$-words
over~$\Gamma$ is denoted by~$\Gamma^\omega$ and
$\Gamma^\infty=\Gamma^*\cup\Gamma^\omega$. For a set $V \subseteq
\Gamma^+$ of finite words let $V^\omega \subseteq \Gamma^\omega$ be
the set of all $\omega$-words of the form $v_0v_1 v_2 \cdots$ with
$v_i \in V$. Two infinite words~$x,y \in \Gamma^\omega$ are
\emph{ultimately equal}, briefly $x\sim_e y$, if there exists
$i\in\bN$ with $x[i,\omega)=y[i,\omega)$.  By $\le_\lex$, we denote
the lexicographic order on the set $\Sigma^\omega$ (with some,
implicitly assumed linear order on the letters from $\Sigma$) and
$\le_\pref$ the prefix order on $\Sigma^\infty$.

For $\Sigma=\{0,1\}$, the support $\supp(x)\subseteq\bN$ is the set of
positions of the letter $1$ in the word $x\in\Sigma^\omega$.

A (nondeterministic) \emph{B\"uchi-automaton}~$M$ is a tuple $M =
(Q,\Gamma, \delta, \iota, F)$ where~$Q$ is a finite set of states,
$\iota \in Q$ is the initial state, $F \subseteq Q$ is the set of
final states, and $\delta \subseteq Q \times \Gamma \times Q$ is the
transition relation. If $\Gamma=\Sigma^n$ for some alphabet $\Sigma$,
then we speak of an \emph{$n$-dimensional B\"uchi-automaton over
  $\Sigma$}. A \emph{run} of~$M$ on an $\omega$-word $x = a_0 a_1 a_2
\cdots$ is an $\omega$-word $r = p_0 p_1 p_2 \cdots$ over the set of
states~$Q$ such that $(p_i, a_i, p_{i+1}) \in \delta$ for all $i \geq
0$. The run~$r$ is \emph{successful} if $p_0=\iota$ and there exists a
final state from~$F$ that occurs infinitely often in~$r$.  The
$\omega$-language $L(M) \subseteq \Gamma^\omega$ defined by~$M$ is the
set of all $\omega$-words that admit a successful run.  An
$\omega$-language $L \subseteq \Gamma^\omega$ is \emph{regular} if
there exists a B\"uchi-automaton~$M$ with~$L(M)=L$.

Alternatively, regular $\omega$-languages can be represented
algebraically. To this end, one defines \emph{$\omega$-semigroups} to
be two-sorted algebras $S=(S_+,S_\omega;\cdot,*,\pi)$ where
$\cdot:S_+\times S_+\to S_+$ and $*:S_+\times S_\omega\to S_\omega$
are binary operations and $\pi:(S_+)^\omega\to S_\omega$ is an
$\omega$-ary operation such that the following hold:
\begin{itemize}
\item $(S_+,\cdot)$ is a semigroup,
\item $s*(t*u)=(s\cdot t)*u$,
\item $s_0\cdot\pi((s_i)_{i\ge 1})=\pi((s_i)_{i\ge0})$,
\item $\pi((s^1_i\cdot s^2_i\cdots s^{k_i}_i)_{i\ge0}) =
  \pi((t_j)_{j\ge0})$ whenever
  \[
    (t_j)_{j\ge0} =
     (s^1_0,s^2_0,\dots,s^{k_0}_0,s^1_1,\dots, s^{k_1}_1,\dots)\ .
  \] 
\end{itemize}
The $\omega$-semigroup $S$ is \emph{finite} if both, $S_+$ and
$S_\omega$ are finite. The free $\omega$-semigroup generated by
$\Gamma$ is
\[
   \Gamma^\infty=(\Gamma^+,\Gamma^\omega;\cdot,*,\pi)
\]
where $u\cdot v$ and $u*x$ are the natural operations of prefixing a
word by the finite word $u$, and $\pi((u_i)_{i\ge0})$ is the
omega-word $u_0u_1u_2\dots$. A homomorphism $h:\Gamma^\infty\to S$ of
$\omega$-semigroups maps finite words to elements of $S_+$ and
$\omega$-words to elements of~$S_\omega$ and commutes with the
operations $\cdot$, $*$, and $\pi$. The algebraic characterisation of
regular $\omega$-languages then reads as follows.

\begin{proposition}
  An $\omega$-language $L\subseteq\Gamma^\omega$ is regular if and only
  if there exists a finite $\omega$-semigroup $S$, a set $T\subseteq
  S_\omega$, and a homomorphism $\eta:\Gamma^\infty\to S$ such that
  $L=\eta^{-1}(T)$.
\end{proposition}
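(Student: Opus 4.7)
The plan is to prove both directions separately. For \textbf{sufficiency}, finiteness of $S_\omega$ reduces the problem to showing each $\eta^{-1}(t)$ is regular. The core step is a Ramsey-style factorisation: for $x \in \Gamma^\omega$, colour each pair $i < j$ in $\bN$ by $\eta(x[i, j)) \in S_+$; invoking $\alephN \ramsey (\alephN)^2_{|S_+|}$ produces $0 \le i_0 < i_1 < \cdots$ and a single $e \in S_+$ with $\eta(x[i_k, i_\ell)) = e$ for all $k < \ell$. Associativity applied to $\eta(x[i_0, i_2)) = \eta(x[i_0, i_1)) \cdot \eta(x[i_1, i_2))$ forces $e = e \cdot e$, and iterating the last two $\omega$-semigroup axioms rewrites $\eta(x) = \eta(x[0, i_0)) * \pi(e, e, e, \ldots)$ (with an easier case when $i_0 = 0$). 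Consequently
\[
  \eta^{-1}(t) \;=\; \bigcup \eta^{-1}(s) \cdot (\eta^{-1}(e))^\omega,
\]
where the union ranges over the finitely many pairs $(s, e) \in S_+ \times S_+$ with $e$ idempotent and $s * \pi(e, e, \ldots) = t$, plus a possible summand $(\eta^{-1}(e))^\omega$ alone. Since $\eta$ restricts to a semigroup homomorphism $\Gamma^+ \to S_+$ into a finite target, each preimage $\eta^{-1}(s)$ is a regular language of finite words, and sets of the form $U \cdot V^\omega$ with regular $U, V \subseteq \Gamma^+$ are $\omega$-regular by standard closure results.

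For \textbf{necessity}, I would use the standard transition-profile construction. Given a B\"uchi-automaton $M = (Q, \Gamma, \delta, \iota, F)$ with $L(M) = L$, take $S_+$ to be the set of pairs $(R, R_F) \in 2^{Q \times Q} \times 2^{Q \times Q}$ with $R_F \subseteq R$, where for $w \in \Gamma^+$ one reads $(p, q) \in R$ as ``some $w$-labelled run goes from $p$ to $q$'' and $(p, q) \in R_F$ as ``and such a run additionally visits $F$''; composition is forced by path concatenation. Take $S_\omega = 2^Q$, with $\eta(x)$ the set of states from which $M$ admits a successful run on $x$, and define $*$ and $\pi$ by the analogous path-existence conditions on concatenations of profiles and on infinite sequences of segment-profiles. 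Setting $T = \{P \in 2^Q : \iota \in P\}$ then yields $L = \eta^{-1}(T)$.

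I expect the main obstacle to be the Ramsey factorisation in the sufficiency direction: it is the only step that appeals to an infinitary theorem, and everything else (closure properties of regular languages, and axiom verifications for the transition-profile $\omega$-semigroup) amounts to finite relational bookkeeping, made tractable by the finiteness of $Q$ and~$S_+$.
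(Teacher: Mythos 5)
The paper states this proposition without proof --- it is the classical algebraic characterisation of regular $\omega$-languages (Wilke, Perrin--Pin; cf.\ \cite{Sta97}) and is simply quoted as background. Your proposal reconstructs the standard textbook argument correctly: the Ramsey factorisation yielding $\eta^{-1}(t)=\bigcup\eta^{-1}(s)\cdot(\eta^{-1}(e))^\omega$ over idempotents $e$ with $s*\pi(e,e,\dots)=t$ for one direction, and the transition-profile $\omega$-semigroup of a B\"uchi automaton for the other, are exactly how this result is usually established, and the remaining verifications are indeed routine.
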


Hence, every B\"uchi-automaton is ``equivalent'' to a homomorphism
into some finite $\omega$-semigroup together with a distinguished
set~$T$ (and vice versa).

For $\omega$-words $x_i=a_i^0 a_i^1 a_i^2\dots \in \Gamma^\omega$,
the \emph{convolution} $x_1 \otimes x_2 \otimes \cdots \otimes
x_n\in(\Gamma^n)^\omega$ is defined by
\[
   (x_1, \dots, x_n )^\otimes
  = (a_1^0,\ldots,x_n^0)\,  (a_1^1,\ldots,a_n^1)\,
       (a_1^2,\ldots,a_n^2) \cdots\ .
\]  
An~$n$-ary relation $R \subseteq (\Gamma^\omega)^n$ is called
\emph{$\omega$-automatic} if the $\omega$-language $\{ (x_1, \dots,
x_n)^\otimes \mid (x_1, \ldots, x_n) \in R \}$ is regular.

To describe the complexity of $\omega$-languages, we will use
language-theoretic terms. Let $\LANG$ denote the class of all
languages (i.e., sets of finite words over some finite set of symbols)
and $\oLANG$ the class of all $\omega$-languages. By $\REG$ and
$\oREG$, we denote the regular languages and $\omega$-languages, resp.
An $\omega$-language is \emph{context-free} if it can be accepted by a
pushdown-automaton with B\"uchi-acceptance (on states), it is
\emph{co-context-free} if its complement is context-free. We denote by
$\oCF$ the set of context-free $\omega$-languages and by $\ocCF$ their
complements. An $\omega$-language belongs to $\LANG^*$ if it is of the
form $\bigcup_{1\le i\le n}U_iV_i^\omega$ with $U_i,V_i\in\LANG$. Then
$\oREG\subseteq\LANG^*$ and $\oCF\subseteq\LANG^*$ where the sets
$U_i$ and $V_i$ are regular and context-free, resp \cite{Sta97}. In
between these two classes, we define the class $\erCF$ of
\emph{eventually regular context-free} $\omega$-languages that
comprises all sets of the form $\bigcup_{1\le i\le n}U_iV_i^\omega$
with $U_i\in\LANG$ context-free and $V_i\in\LANG$
regular. Alternatively, eventually regular context-free
$\omega$-languages are the finite unions of $\omega$-languages of the
form $C\cdot L$ where $C$ is a context free-language and $L$ a regular
$\omega$-language. Let $\cerCF$ denote the set of complements of
eventually regular context-free $\omega$-languages.

A final, rather peculiar class of $\omega$-languages is
{\boldmath$\Lambda$}: it is the class of $\omega$-languages $L$ such
that $(\mathbb R,\le)$ embeds into $(L,\le_\lex)$ (the name derives
from the notation {\boldmath$\lambda$} for the order type of~$(\mathbb
R,\le)$).

\subsection{$\omega$-automatic $(k,\ell)$-partitions}

An \emph{$\omega$-automatic presentation of a $(k,\ell)$-partition
  $(V,E_1,\dots,E_\ell)$} is a pair $(L,h)$ consisting of a regular
$\omega$-language~$L$ and a surjection $h:L\to V$ such that
$\{(x_1,x_2,\dots,x_k)\in L^k\mid \{h(x_1),h(x_2),\dots,h(x_k)\}\in
E_i\}$ for $1\le i\le k$ and 
$R_\approx=\{(x_1,x_2)\in L^2\mid
h(x_1)=h(x_2)\}$ are $\omega$-automatic. An $\omega$-automatic
presentation is \emph{injective} if $h$ is a bijection. A
$(k,\ell)$-partition is \emph{(injectively) $\omega$-automatic} if it
has an (injective) $\omega$-automatic presentation. From
\cite{BarKR08}, it follows that an uncountable $\omega$-automatic
$(k,\ell)$-partition has $\cont$ elements.

This paper is concerned with the question whether every (injective)
$\omega$-{}auto\-matic presentation $(L,h)$ of a $(k,\ell)$-partition
admits a ``simple'' set $H\subseteq L$ such that $h(H)$ has $\lambda$
elements and is homogeneous. More precisely, let $\mathcal C$ be a
class of $\omega$-languages, $k,\ell\ge2$ natural numbers, and
$\kappa$ and $\lambda$ cardinal numbers. Then we write
\[
   (\kappa,\omega\mathsf{A})\ramsey(\lambda,\mathcal C)^k_\ell
\]
if the following partition property holds: for every
$\omega$-automatic presentation~$(L,h)$ of a $(k,\ell)$-partition~$G$
of size $\kappa$, there exists $H\subseteq L$ in $\mathcal C$ such
that $h(H)$ is homogeneous in $G$ and of size $\lambda$. 
\[
   (\kappa,\omega\mathsf{iA})\ramsey(\lambda,\mathcal C)^k_\ell
\]
is to be understood similarly where we only consider injective
$\omega$-automatic presentations.

\begin{remark}
  Let $G=(V,E_1,\dots,E_\ell)$ be some $(k,\ell)$-partition with
  $\omega$-automatic presentation $(L,h)$. Then the partition property
  above requires that there is a ``large'' homogeneous set $X\subseteq
  V$ and an $\omega$-language $H\in\mathcal C$ such that $h(H)=X$, in
  particular, every element of $X$ has at least one representative
  in~$H$. Alternatively, one could require that $h^{-1}(X)\subseteq L$
  is an $\omega$-language from~$\mathcal C$. In this paper, we only
  encounter classes $\mathcal C$ of $\omega$-languages such that the
  following closure property holds: if $H\in\mathcal C$ and $R$ is an
  $\omega$-automatic relation, then also $R(H)=\{y \mid \exists x\in
  H:(x,y)\in R\}\in\mathcal C$. Since
  $h^{-1}h(H)=R_\approx(H)$, all our results also hold for this
  alternative requirement $h^{-1}(X)\in\mathcal C$.
\end{remark}

This paper shows
\begin{enumerate}[(1)]
\addtocounter{enumi}{-1}
\item if $k,\ell\ge2$, then
  $(\alephN,\omega\mathsf{A})\ramsey(\alephN,\omega\mathrm{REG})^k_\ell$,
  but
  $(\cont,\omega\mathsf{A})\not\ramsey(\alephN,\omega\mathrm{REG})^k_\ell$,
  see Theorem~\ref{T-countable}.
\item if $\ell\ge2$, then
  $(\cont,\omega\mathsf{A})\ramsey(\cont,\cerCF)^2_\ell$, see
  Theorem~\ref{T-uncountable}.
\item if $k\ge3$, $\ell\ge2$, and $\lambda>\alephN$, then
  $(\cont,\omega\mathsf{iA})\not\ramsey(\lambda,\oLANG)^k_\ell$, see
  Theorem~\ref{T-S2}.
\item if $k,\ell\ge2$ and $\lambda>\alephN$, then
  $(\cont,\omega\mathsf{iA})\not\ramsey(\lambda,\oCF)^k_\ell$, see
  Theorem~\ref{T-complexity}.
\end{enumerate}

Here, the first part of (0) is a strengthening of Ramsey's theorem
since the infinite homogeneous set is regular. The second part might
look surprising since larger $(k,\ell)$-partitions should have larger
homogeneous sets -- but not necessarily regular ones! In contrast to
Sierpi\'nski's result, (1) shows that $\omega$-automatic
$(2,\ell)$-partitions have a larger degree of homogeneity than
arbitrary $(2,\ell)$-partitions. Even more, the complexity of the
homogeneous set can be bound in language-theoretic terms (there is
always a homogeneous set that is the complement of an eventually
regular context-free $\omega$-language).  Statement (2) is an analogue
of Sierpi\'nski's Theorem~\ref{T-S} showing that (injective)
$\omega$-automatic $(k,\ell)$-partitions are as in-homogeneous as
arbitrary $(k,\ell)$-partitions provided $k\ge3$. The complexity bound
from (1) is shown to be optimal by~(3) proving that one cannot always
find context-free homogeneous sets.  Hence, despite the existence of
large homogeneous sets for $k=2$, for some $\omega$-automatic
presentations, they are bound to have a certain (low) level of
complexity that is higher than the regular $\omega$-languages.

\section{Countably infinite homogeneous sets}

Let $k,\ell\ge2$ be arbitrary. Then, from Ramsey's theorem, we obtain
immediately $(\alephN,\omega\mathsf{A})\ramsey(\alephN,\oLANG)^k_\ell$
and $(\cont,\omega\mathsf{A})\ramsey(\alephN,\oLANG)^k_\ell$, i.e.,
all infinite $\omega$-automatic $(k,\ell)$-partitions have homogeneous
sets of size~$\alephN$. In this section, we ask whether such
homogeneous sets can always be chosen regular:

\begin{theorem}\label{T-countable}
  Let $k,\ell\ge2$. Then
  \begin{enumerate}[(a)]
  \item $(\alephN,\omega\mathsf{A})\ramsey(\alephN,\omega\mathrm{REG})^k_\ell$.
  \item $(\cont,\omega\mathsf{iA})\ramsey(\alephN,\omega\mathrm{REG})^k_\ell$.
  \item $(\cont,\omega\mathsf{A})\not\ramsey(\alephN,\LANG^*)^k_\ell$,
    and therefore in particular
    $(\cont,\omega\mathsf{A})\not\ramsey(\alephN,\oCF)^k_\ell$ and
    $(\cont,\omega\mathsf{A})\not\ramsey(\alephN,\oREG)^k_\ell$.
  \end{enumerate}
\end{theorem}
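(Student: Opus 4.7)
For part~(a), my plan is to combine Ramsey's theorem with a pigeonhole argument on a B\"uchi automaton for~$L$. Ramsey produces an $E_{i_0}$-homogeneous set $W\subseteq V$ of cardinality~$\alephN$. Each $w\in W$ admits at least one ultimately periodic representative $u_w v_w^\omega\in L$, because $h^{-1}(w)\cap L$ is a non-empty regular $\omega$-language. Each such representative corresponds to a lasso in a fixed B\"uchi automaton~$M$ for~$L$, and since $M$ has only finitely many loop labels, pigeonhole yields a fixed cycle label $v\in\Sigma^+$ and a state~$q$ such that infinitely many $w\in W$ admit a representative $u v^\omega$ with $u$ a path from the initial state to~$q$. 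The lasso language $U_v=\{u\in\Sigma^*:uv^\omega\in L\}$ is regular, and restricting $R_\approx$ and each $E_i$ to pairs of $\omega$-words with common periodic tail $v^\omega$ turns them into ordinary (finite-word) \emph{automatic} relations on $U_v$, because after a sufficiently long prefix the B\"uchi computation on a periodic tail depends only on finitely much data. The induced sub-partition on $h(U_v\cdot v^\omega)$ is therefore a countable automatic $(k,\ell)$-partition, and the automatic analogue of Ramsey (for graphs in~\cite{Rub08}, extending to $(k,\ell)$-partitions by the same techniques) yields a regular $U'\subseteq U_v$ whose image is an infinite homogeneous subset of~$W$. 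Then $H=U'\cdot\{v^\omega\}$ is a regular $\omega$-language witnessing~(a).

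Part~(b) reduces to part~(a). The uncountable regular $\omega$-language~$L$ contains a countably infinite regular sub-$\omega$-language: in a B\"uchi automaton for~$L$ one finds a state reached by some prefix $u$ and carrying two distinct cycles $v$ and $w$ with $w$ accepting, yielding $L'=\{u v^n w^\omega:n\ge 0\}\subseteq L$. Since $h$ is injective, $h(L')$ is a countably infinite vertex set carrying the $\omega$-automatic presentation $(L',h|_{L'})$, so part~(a) applies and delivers the desired regular $H\subseteq L'$.

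For part~(c), I construct a non-injective $\omega$-automatic presentation tailored to defeat~$\LANG^*$. Take $L=\{0,1\}^\omega$, $V=L/{\sim_e}$, and let $h$ be the quotient; this is an $\omega$-automatic presentation because $\sim_e$ is $\omega$-automatic. Equip $V$ with a partition induced by a $\sim_e$-invariant $\omega$-automatic predicate. The structural engine is the following dichotomy for $H=\bigcup_{i=1}^m U_iV_i^\omega\in\LANG^*$: if every $V_i$ is a singleton $\{v_i\}$ then each summand $U_iV_i^\omega$ lies in the single $\sim_e$-class $[v_i^\omega]$, so $|h(H)|\le m$ is finite; otherwise some $|V_i|\ge 2$ and then, for any $u\in U_i$, the set $uV_i^\omega$ has cardinality~$\cont$ while every $\sim_e$-class is countable, so $|h(H)|=\cont$. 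Consequently an infinite $h(H)$ with $H\in\LANG^*$ must in fact have size~$\cont$ and is witnessed by a ``Cantor-type'' subset $uV^\omega\subseteq L$ with $|V|\ge 2$. It therefore suffices to choose the partition so that \emph{every} such $h(uV^\omega)$ contains pairs in both $E_1$ and~$E_2$; the abundance of agreement/disagreement patterns at infinity inside $uV^\omega$ (obtained by flipping blocks on arithmetic sub-progressions of positions) can be detected by an appropriate $\sim_e$-invariant $\omega$-automatic predicate comparing positions of~$x$ and~$y$.

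The main obstacles are, for~(a), making precise the translation ``$\omega$-automatic relations on $\omega$-words with a common periodic tail become finite-word automatic relations on their lasso parts'' so that automatic Ramsey genuinely applies; and for~(c), producing the explicit partition and carrying out the combinatorial case analysis showing that no finite $V$ with $|V|\ge 2$ can produce a homogeneous image $h(uV^\omega)$.
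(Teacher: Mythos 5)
Your part~(a) attempts to reprove, by elementary means, what the paper obtains by citing two substantial results, and the elementary argument breaks exactly where those results do the work. First, the claim that every $w\in W$ has an ultimately periodic representative ``because $h^{-1}(w)\cap L$ is a non-empty regular $\omega$-language'' is unjustified: $h^{-1}(w)$ is an equivalence class of the $\omega$-automatic relation $R_\approx$, and such classes need not be regular (for $R_\approx=\mathord{\sim_e}$ and $x$ not ultimately periodic, $[x]_{\sim_e}$ is a countable set containing no ultimately periodic word, hence not regular, since every non-empty regular $\omega$-language contains an ultimately periodic word). That every class of a \emph{countable} quotient does contain an ultimately periodic representative is true, but it is essentially the content of the transversal/dichotomy theorem of \cite{BarKR08} that the paper invokes, not a consequence of regularity of point preimages. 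Second, the pigeonhole step fails: a B\"uchi automaton has finitely many states but infinitely many loop labels (cycles of unbounded length), so nothing forces infinitely many $w\in W$ to share a common period $v$; reducing to finitely many periods is again exactly what the structure theorems of \cite{BarKR08} and \cite{Blu99} provide. The paper's proof of~(a) simply chains these citations: extract a regular injective sub-presentation $L'$, convert it into an automatic presentation over finite words, apply the automatic Ramsey theorem of \cite{Rub08}, and pull the regular homogeneous set back to a regular $\omega$-language. Your part~(b) (restrict to a countably infinite regular $L'\subseteq L$ and apply~(a)) coincides with the paper's argument and inherits the status of~(a); just make sure the chosen cycles give infinitely many \emph{distinct} words $uv^nw^\omega$ (take $|v|=|w|$ and $v\neq w$).

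For part~(c) you have computed the decisive dichotomy --- for $H=\bigcup_i U_iV_i^\omega\in\LANG^*$ the image $h(H)$ in $\{0,1\}^\omega/\mathord{\sim_e}$ is either finite or of size $\cont$ --- but you then fail to draw the conclusion and instead embark on an unfinished construction. Since the partition property asks for a homogeneous set of size exactly $\alephN$, the \emph{trivial} partition $E_1=[V]^k$ already witnesses~(c): every subset is homogeneous, yet no $H\in\LANG^*$ has $|h(H)|=\alephN$. This is precisely the paper's proof. Your plan to additionally rig the partition so that every image of size $\cont$ is inhomogeneous is both unnecessary for the statement and not carried out: no $\sim_e$-invariant $\omega$-automatic predicate is exhibited, and you list this as an open obstacle (that harder assertion is the subject of the paper's Theorem~\ref{T-complexity}, which needs the different base language $(1^+0^+)^\omega$ and the almost-disjoint-supports language $N$). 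Two small slips in the dichotomy itself: $U_iv_i^\omega$ falls into at most $|v_i|$ classes of $\sim_e$, not one (still finite, so harmless); and $|V_i|\ge2$ does not imply $|V_i^\omega|>1$ (e.g.\ $V_i=\{0,00\}$), so you must split on $|V_i^\omega|$ and extract two distinct words of equal length in $V_i^+$, as the paper does.
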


\begin{proof}
  Let $(L,h)$ be an $\omega$-automatic presentation of some
  $(k,\ell)$-partition~$G=(V,E_1,\dots,E_\ell)$ with $|V|=\alephN$. By
  \cite{BarKR08}, there exists $L'\subseteq L$ regular such that
  $(L',h)$ is an injective $\omega$-automatic presentation
  of~$G$. From a B\"uchi-automaton for $L'$, one can compute a finite
  automaton accepting some language $K$ such that $(K,h')$ is an
  injective automatic presentation of~$G$~\cite{Blu99}. Hence, by
  \cite{Rub08}, there exists a regular set $H'\subseteq K$ such that
  $h'(H')$ is homogeneous in~$G$ and countably infinite. From this
  set, one obtains a regular $\omega$-language $H\subseteq L'\subseteq
  L$ with $h(H)=h'(H')$, i.e., $h(H)$ is a homogeneous set of size
  $\alephN$. This proves~(a).

  To prove (b), let $(L,h)$ be an injective $\omega$-automatic
  presentation of some $(k,\ell)$-partition $G=(V,E_1,\dots,E_\ell)$
  of size~$\cont$. Then there exists a regular $\omega$-language
  $L'\subseteq L$ with $|L'|=\alephN$. Consider the sub-partition
  $G'=(h(L'),E_1',\dots,E_\ell')$ with $E_i'=E_i\cap[h(L')]^k$. This
  $(k,\ell)$-partition has as $\omega$-automatic presentation the
  pair~$(L',h)$.  Then, by (a), there exists $L''\subseteq
  L'$ regular and infinite such that $h(L'')$ is homogeneous in $G'$
  and therefore in~$G$. Since $h$ is injective, this implies
  $|h(L')|=|L'|=\alephN$.

  Finally, we show (c) by a counterexample. Let $L=\{0,1\}^\omega$,
  $V=L/\mathord{\sim_e}$, and $h:L\to V$ the canonical
  mapping. Furthermore, set $E_1=[L]^k$. Then
  $G=(V,E_1,\emptyset,\dots,\emptyset)$ is a $(k,\ell)$-partition with
  $\omega$-automatic presentation $(L,h)$.

  Now let $H=\bigcup_{1\le i\le n}U_iV_i^\omega\subseteq L$ for some
  non-empty languages $U_i,V_i\subseteq\{0,1\}^+$ such that $h(H)$ is
  homogeneous and infinite.

  If $|V_i^\omega|=1$, then $U_iV_i^\omega/\mathord{\sim_e}$ is
  finite. Since $h(H)$ is infinite, there exists $1\le i\le n$ with
  $|V_i^\omega|>1$ implying the existence of words $v,w\in V_i^+$ such
  that $|v|=|w|$ and $v\neq w$. For $u\in U_i$, the set
  $u\{v,w\}^\omega\subseteq H$ has $\cont$ equivalence classes
  wrt.~$\sim_e$. Hence $|h(H)|=\cont$.
\end{proof}

\section{Uncountable homogeneous sets}

\subsection{A Ramsey theorem for $\omega$-automatic $(2,\ell)$-partitions}

The main result of this section is the following theorem that follows
immediately from Prop.~\ref{P-ramsey2} and Lemma~\ref{L-H}.

\begin{theorem}\label{T-uncountable}
  For all $\ell\ge2$, we have
  $(\cont,\omega\mathsf{A})\ramsey(\cont,\cerCF\cap{\text{\boldmath$\Lambda$}})^2_\ell$.
\end{theorem}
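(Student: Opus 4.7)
The plan is to combine the two ingredients already named in the statement. Given an $\omega$-automatic presentation $(L,h)$ of a $(2,\ell)$-partition $G=(V,E_1,\dots,E_\ell)$ with $|V|=\cont$, I would first invoke Lemma~\ref{L-H} to extract a ``large tree'' $T\subseteq L$ of size $\cont$ that lies in $\cerCF\cap{\text{\boldmath$\Lambda$}}$. The intended picture is a Cantor-like family of $\omega$-words inside $L$: a perfect binary tree of prefixes along which one can independently continue into $L$, which automatically yields the {\boldmath$\Lambda$}-embedding of $(\mathbb R,\le)$ and which picks out $\cont$ distinct $R_\approx$-classes. The reason the resulting set ends up in $\cerCF$ rather than in $\oREG$ is that the branches one has to delete --- those colliding modulo $\sim_e$, those escaping $L$, or those violating the required divergence pattern --- are naturally expressible as a finite union $\bigcup_i C_i R_i$ with $C_i$ context-free and $R_i$ regular, and one takes its complement.

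Next I would invoke Prop.~\ref{P-ramsey2} as a Ramsey refinement inside this tree. Fix a homomorphism $\eta$ from the appropriate free $\omega$-semigroup into a finite $\omega$-semigroup $S$ recognising each $E_i$ after lifting along $h$. Colour each unordered pair $\{x,y\}\subseteq T$ with $x<_\lex y$ by $\eta(x\otimes y)\in S_\omega$; the palette is finite. The proposition should then yield a subtree $H\subseteq T$ of size $\cont$, still in $\cerCF\cap{\text{\boldmath$\Lambda$}}$, on which this colour is constant. Since each $E_i$ equals $\eta^{-1}(T_i)$ for a fixed $T_i\subseteq S_\omega$, constant colour on $H$ forces $[h(H)]^2\subseteq E_i$ for a single~$i$, which is exactly the required homogeneity; and $|h(H)|=\cont$ because the {\boldmath$\Lambda$}-property already guarantees $\cont$ many lex-distinct, hence $R_\approx$-inequivalent (by the tameness built in by Lemma~\ref{L-H}), representatives.

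The main obstacle, and the reason both cited results need genuine work, is the preservation of $\cerCF\cap{\text{\boldmath$\Lambda$}}$ throughout. The first step must match prefixes of tree branches (a pushdown task) while simultaneously ruling out $\sim_e$-collisions between different branches, which is exactly why the natural description of the deleted set is context-free rather than regular. The Ramsey step must then refine by a finite colouring \emph{while retaining the tree shape and the cardinality $\cont$}; a naive pigeonhole iteration along branches would either shrink the set below $\cont$ or destroy the $\cerCF$ form, so the argument must work uniformly on whole cones of the tree rather than on individual branches. That this trade-off is tight, and not an artefact of the approach, is confirmed by Theorem~\ref{T-complexity} below, which rules out strengthening $\cerCF$ to $\oCF$ and thereby certifies that $\cerCF\cap{\text{\boldmath$\Lambda$}}$ is the right target class.
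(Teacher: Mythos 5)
Your outline inverts the roles of the two cited results and, more importantly, leaves the actual Ramsey step unproved. Lemma~\ref{L-H} does not ``extract a large tree from $L$'': it only asserts that the explicitly constructed set $H_{u,v,w}=u\cdot h(N)$ (with $h(0)=v$, $h(1)=w$, and $N$ the almost-disjoint-supports language of Lemma~\ref{L-almost-disjoint}) lies in $\cerCF\cap\text{\boldmath$\Lambda$}$. The heavy lifting is done by Prop.~\ref{P-ramsey} and Prop.~\ref{P-ramsey2}, which you treat as a black box that ``should yield'' a monochromatic subset of size $\cont$ for the finite colouring $\{x,y\}\mapsto\delta(x\otimes y)$. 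But that step, stated for an arbitrary finite colouring of $[T]^2$ with $|T|=\cont$, is false: Sierpi\'nski's Theorem~\ref{T-S} says precisely that $\cont\not\to(\aleph_1)^2_2$, so no amount of care about ``cones of the tree'' can make a generic pigeonhole refinement go through. The theorem is true only because the colouring induced by the B\"uchi recognisers is far from generic, and your sketch never says why.

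The missing idea is the one the paper imports from B\'ar\'any--Kaiser--Rubin: after choosing, via Lemma~\ref{L-listing}, finitely many representatives $x_0,\dots,x_C$ whose $\sim_e$-classes avoid each other's $\approx$-classes, one constructs the words $u$, $v$, $w$ so that for \emph{all} $x,y\in\{0,1\}^\omega$ with both $\supp(x)\setminus\supp(y)$ and $\supp(y)\setminus\supp(x)$ infinite, the set $\{\delta(\chi(x),\chi(y)),\delta(\chi(y),\chi(x))\}$ equals one fixed two-element set. Since distinct words of $N$ have finite support intersection but individually infinite supports, every pair from $H_{u,v,w}$ automatically receives the same unordered colour; no refinement of $H_{u,v,w}$ is ever performed, so the difficulty you identify as the main obstacle --- preserving $\cerCF\cap\text{\boldmath$\Lambda$}$ under a Ramsey extraction --- never arises. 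You also omit the reduction of Prop.~\ref{P-ramsey2} from arbitrary to injective presentations (one adds a colour $E_0$ for the kernel of $h$ and uses the fact that $x_{\bullet\circ}\not\approx x_{\circ\bullet}$ to rule $E_0$ out as the homogeneous colour), so your argument does not establish that $h$ is injective on the homogeneous set, which is needed to conclude $|h(H)|=\cont$.
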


\subsubsection{The proof}

The proof of this theorem will construct a language from $\cerCF$ that
describes a homogeneous set. This language is closely related to the
following language
\[
   N=1\{0,1\}^\omega\cap
      \bigcap_{n\ge0} \{0,1\}^n
                         (0\{0,1\}^n00\cup 10^n\{01,10\})
                            \{0,1\}^\omega\ ,
\]
i.e., an $\omega$-word~$x$ belongs to $N$ iff it starts with $1$ and, for
every $n\ge0$, we have $x[n,2n+3)\in 0\{0,1\}^*00\cup 10^*01\cup
10^*10$. We first list some useful properties of this language~$N$:

\begin{lemma}\label{L-almost-disjoint}
  The $\omega$-language $N$ is contained in $(1^+0^+)^\omega$, belongs
  to $\cerCF\cap\text{\boldmath$\Lambda$}$, and $\supp(x)\cap\supp(y)$
  is finite for any $x,y\in N$ distinct.
\end{lemma}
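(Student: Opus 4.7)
The starting point is to make the structure of words in~$N$ explicit: any $x\in N$ satisfies $x[0]=1$, and for every $n\ge 0$ with $x[n]=1$ the constraint at position~$n$ forces $x[n,2n+3)\in 10^n\{01,10\}$, so the next~$1$ of $x$ occurs at position $2n+1$ or $2n+2$. Hence $\supp(x)$ is an increasing sequence $p_0<p_1<\cdots$ with $p_0=0$ and $p_{i+1}\in\{2p_i+1,2p_i+2\}$ for all $i\ge 0$. Conversely, every such sequence arises from some $x\in N$: for positions $n\ge 1$ with $x[n]=0$, the index $j$ with $p_j<n<p_{j+1}$ satisfies $p_{j+1}\le 2p_j+2\le 2n$ and $p_{j+2}\ge 2p_{j+1}+1\ge 2n+3$, hence $x[2n+1]=x[2n+2]=0$ and $x[n,2n+3)\in 0\{0,1\}^n 00$. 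The first claim is now immediate: the $p_i$ grow exponentially so $x$ has infinitely many $1$'s, and each $1$ at position $\ge 1$ is followed by at least one~$0$, so $x$ has infinitely many $0$'s; hence $x\in(1^+0^+)^\omega$.

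For $N\in\text{\boldmath$\Lambda$}$, I define $\Phi\colon\{1,2\}^\omega\to N$ by sending $(\epsilon_j)_{j\ge 0}$ to the word whose $1$-position sequence is $p_0=0$, $p_{i+1}=2p_i+\epsilon_i$. If two input sequences first differ at index $j_0$ with $\epsilon_{j_0}=1$ and $\epsilon'_{j_0}=2$, then their images $x,x'$ agree up to and including position $2p_{j_0}$ and first differ at position $2p_{j_0}+1$, where $x$ has a~$1$ and $x'$ a~$0$. Thus $\Phi$ reverses the lexicographic order, and since $(\mathbb R,\le)$ embeds into $(\{1,2\}^\omega,\le_\lex)$, it also embeds into $(N,\le_\lex)$.

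For $N\in\cerCF$, I pass to the complement: directly from the definition,
\[
   N^c \;=\; 0\{0,1\}^\omega\;\cup\;\bigcup_{n\ge 0}\{0,1\}^n B_n\{0,1\}^\omega,
\]
where $B_n=\{0,1\}^{n+3}\setminus(0\{0,1\}^n 00\cup 10^n\{01,10\})$. Setting $U=\bigcup_{n\ge 0}\{0,1\}^n B_n$, this rewrites as $N^c=(\{0\}\cup U)\cdot\{0,1\}^\omega$, so it suffices to show that $U$ is context-free. Splitting $B_n$ into three cases---$(\alpha)$~starts with~$0$ and does not end in $00$; $(\beta)$~starts with~$1$ and its next $n$ letters contain a~$1$; $(\gamma)$~starts with~$1$, is followed by $0^n$, and ends with $00$ or~$11$---exhibits $U$ as the finite union of three context-free languages: in each case a pushdown automaton pushes the first $n$ letters of the input and pops them while reading the $(n+3)$-letter window, thereby enforcing the length and, when needed, witnessing the pattern~$0^n$. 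Hence $N^c\in\erCF$.

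Finally, for the almost-disjoint supports, let $x\ne y$ in $N$ with $\supp(x)=\{p_0<p_1<\cdots\}$ and $\supp(y)=\{q_0<q_1<\cdots\}$, and let $i_0\ge 1$ be minimal with $p_{i_0}\ne q_{i_0}$. Both values lie in $\{2p_{i_0-1}+1,\,2p_{i_0-1}+2\}$, so $|p_{i_0}-q_{i_0}|=1$; WLOG $q_{i_0}=p_{i_0}+1$. An easy induction from the recurrence gives, for every $i\ge i_0$,
\[
  p_i\in\bigl[2^{i-i_0}(p_{i_0}+1)-1,\;2^{i-i_0}(p_{i_0}+2)-2\bigr],\quad q_i\in\bigl[2^{i-i_0}(p_{i_0}+2)-1,\;2^{i-i_0}(p_{i_0}+3)-2\bigr].
\]
Using $p_{i_0}\ge 1$, a direct endpoint comparison yields $q_j>p_i$ whenever $j\ge i\ge i_0$ and $p_i>q_j$ whenever $i>j\ge i_0$; in either case $p_i\ne q_j$. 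Hence $\supp(x)\cap\supp(y)\subseteq\{p_i:i<i_0\}$ is finite. The main obstacle is precisely this last interval bookkeeping: the doubling factor~$2$ is small enough that only the refined two-sided bounds, rather than a crude exponential-growth argument, rule out coincidences $p_i=q_j$ across differing indices.
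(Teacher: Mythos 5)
Your proof is correct, and for three of the four claims it follows essentially the paper's route: you extract the same branch structure of $\supp(x)$ (namely $p_0=0$, $p_{i+1}\in\{2p_i+1,2p_i+2\}$), you obtain {\boldmath$\Lambda$} from the binary choice at each step via an order-reversing embedding of $(\{1,2\}^\omega,\le_\lex)$ (the paper's embedding of $(\{0,1\}^\omega,\le_\lex)$ is in fact also order-reversing, and both rely on self-duality of $\mathbb R$), and you exhibit the complement as $C\cdot\{0,1\}^\omega$ with $C$ context-free. One small difference there: you work with the exact complement $B_n=\{0,1\}^{n+3}\setminus(0\{0,1\}^n00\cup 10^n\{01,10\})$ and split it into three PDA-recognisable cases, whereas the paper writes the complement with the smaller set $0\{0,1\}^n\{01,10,11\}\cup 1\{0,1\}^n\{00,11\}$; the paper's expression silently drops the violations $1w\,ab$ with $w\ne 0^n$ and $ab\in\{01,10\}$, which is legitimate only because those are implied by violations at other positions -- your version avoids having to justify that redundancy. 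The genuinely different part is the almost-disjointness of supports: the paper supposes $\supp(x)\cap\supp(y)$ infinite, takes arbitrarily long equal-length prefixes $u1\le_\pref x$ and $v1\le_\pref y$, and shows inductively via the forced shape $u1=u'10^{|u'|}1$ (resp.\ $u'10^{|u'|}01$) that $u=v$, hence $x=y$; you instead propagate the first divergence $|p_{i_0}-q_{i_0}|=1$ through the recurrence to get disjoint two-sided interval bounds on $p_i$ and $q_i$, which rules out any coincidence $p_i=q_j$ with $i,j\ge i_0$. Your bounds check out (the base case gives singleton intervals and the step $[A,B]\mapsto[2A+1,2B+2]$ preserves the stated form, with $p_{i_0}\ge1$ closing the one delicate comparison), so both arguments are valid; the paper's is slicker and avoids arithmetic, yours gives the quantitative statement $\supp(x)\cap\supp(y)=\{p_0,\dots,p_{i_0-1}\}$ for free.
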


\begin{proof}
  Let $b_i\in\{0,1\}$ for all $i\ge0$ and suppose the word
  $x=b_0b_1\dots$ belongs to~$N$. Then $b_0=1$, hence the word $x$
  contains at least one occurrence of $1$. Note that, whenever
  $b_n=1$, then $\{b_{2n+1},b_{2n+2}\}=\{0,1\}$, hence $x$ contains
  infinitely many occurrences of $1$ and therefore infinitely many
  occurrences of~$0$, i.e., $N\subseteq(1^+0^+)^\omega$.

  Note that the complement of $N$ equals
  \begin{align*}
    &\ 0\{0,1\}^\omega\cup
    \bigcup_{n\ge0}\Big(\{0,1\}^n(0\{0,1\}^n\{01,10,11\}\cup
    1\{0,1\}^n\{00,11\})\{0,1\}^\omega\Big)\\
    =& \left[0\cup
    \bigcup_{n\ge0}\{0,1\}^n(0\{0,1\}^n\{01,10,11\}\cup
    1\{0,1\}^n\{00,11\})\right]\{0,1\}^\omega\ .
  \end{align*}
  Since the expression in square brackets denotes a context-free
  language, $\{0,1\}^\omega\setminus N$ is an eventually regular
  context-free $\omega$-language. 

  Note that a word $10^{n_0}10^{n_1}10^{n_2}\dots$ belongs to $N$ iff,
  for all $k\ge0$, we have $0\le n_k-|10^{n_0}10^{n_1}\dots
  10^{n_{k-1}}|\le 1$. Hence, when building a word from $N$, we have
  two choices for any $n_k$, say $n_k^0$ and $n_k^1$ with
  $n_k^0<n_k^1$. But then $a_0a_1a_2\dots\mapsto
  10^{n_0^{a_0}}10^{n_1^{a_1}}10^{n_2^{a_2}}\dots$ defines an order
  embedding
  $(\{0,1\}^\omega,\le_\lex)\hookrightarrow(N,\le_\lex)$. Since
  $(\mathbb R,\le)\hookrightarrow(\{0,1\}^\omega,\le_\lex)$, we get
  $N\in\text{\boldmath$\Lambda$}$.

  Now let $x,y\in N$ with $\supp(x)\cap\supp(y)$ infinite. Then there
  are arbitrarily long finite words $u$ and $v$ of equal length such
  that $u1$ and $v1$ are prefixes of $x$ and $y$, resp. Since $u1$ is
  a prefix of $x\in N$, it is of the form $u1=u'10^{|u'|}1$ (if $|u|$
  is even) or $u1=u'10^{|u'|}01$ (if $|u|$ is odd) and analogously
  for~$v$. Inductively, one obtains $u'=v'$ and therefore~$u=v$. Since
  $u$ and $v$ are arbitrarily long, we showed $x=y$. 
\end{proof}

\begin{lemma}\label{L-listing}
  Let $\sim$ and $\approx$ be two equivalence relations on some
  set~$L$ such that any equivalence class $[x]_\sim$ of $\sim$ is
  countable and $\approx$ has $\cont$ equivalence classes. Then there
  are elements $(x_\alpha)_{\alpha<\cont}$ of $L$ such that
  $[x_\alpha]_{\sim_e}\cap[x_\beta]_\approx=\emptyset$ for all
  $\alpha<\beta$.
\end{lemma}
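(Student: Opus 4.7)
The plan is a straightforward transfinite-recursive diagonalisation of length~$\cont$. The key observation is that the desired condition $[x_\alpha]_\sim\cap[x_\beta]_\approx=\emptyset$ for $\alpha<\beta$ depends, at the moment $x_\beta$ is selected, only on the $\sim$-classes of the already chosen points $x_\alpha$ and on the single $\approx$-class to which $x_\beta$ will belong. So at each stage it suffices to pick $x_\beta$ from an $\approx$-class that avoids the union of all previously selected $\sim$-classes.

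Concretely, I would construct $(x_\beta)_{\beta<\cont}$ by recursion on $\beta$. Assume $x_\alpha$ has been fixed for every $\alpha<\beta$ with $\beta<\cont$, and set
\[
  T_\beta \;=\; \bigcup_{\alpha<\beta}[x_\alpha]_\sim\,.
\]
Each $\sim$-class is countable by hypothesis and $|\beta|<\cont$, so
$|T_\beta|\le|\beta|\cdot\alephN = \max(|\beta|,\alephN)<\cont$.
Consequently, the collection of $\approx$-classes meeting $T_\beta$ has cardinality at most $|T_\beta|<\cont$; since $\approx$ has exactly $\cont$ equivalence classes, some $\approx$-class $C$ is disjoint from $T_\beta$. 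Pick any $x_\beta\in C$. Then for every $\alpha<\beta$, $[x_\alpha]_\sim\subseteq T_\beta$ is disjoint from $[x_\beta]_\approx=C$, which is exactly the condition to be verified.

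I do not foresee a genuine obstacle: once the cardinal-arithmetic estimate above is in hand, the recursion carries through all $\beta<\cont$ (invoking the axiom of choice once per stage to select $x_\beta\in C$) and produces the whole sequence. The only point to watch is that the bound $|\beta|\cdot\alephN<\cont$ for $\beta<\cont$ is used, but this is immediate in ZFC from $\alephN<\cont$ and requires no additional hypotheses such as CH. Note that the conclusion of the lemma refers to $[x_\alpha]_{\sim_e}$, which I read as a slip for $[x_\alpha]_\sim$, since only $\sim$ and $\approx$ have been introduced on the abstract set~$L$; the proof above establishes the version with $\sim$.
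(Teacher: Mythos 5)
Your proof is correct and is essentially the paper's argument: the paper also builds the sequence by transfinite recursion, using the same cardinality estimate $|\beta|\cdot\alephN<\cont$ to show that not every $\approx$-class can meet the union of the previously chosen $\sim$-classes (the paper phrases this step as a contradiction, you phrase it as a direct count, but the content is identical). Your reading of $[x_\alpha]_{\sim_e}$ as a slip for $[x_\alpha]_\sim$ also matches the paper's own proof, which works with $\sim$ throughout.
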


\begin{proof}
  We construct the sequence $(x_\alpha)_{\alpha<\cont}$ by ordinal
  induction. So assume we have elements $(x_\alpha)_{\alpha<\kappa}$
  for some ordinal $\kappa<\cont$ with
  $[x_\alpha]_{\sim}\cap[x_\beta]_\approx=\emptyset$ for all
  $\alpha<\beta<\kappa$.

  Suppose
  $\bigcup_{\alpha<\kappa}[x_\alpha]_{\sim}\cap[x]_\approx\neq\emptyset$
  for all $x\in L$. For $x,y\in L$ with $x\not\approx y$, we have
  $(\bigcup_{\alpha<\kappa}[x_\alpha]_{\sim}\cap[x]_\approx)\cap(\bigcup_{\alpha<\kappa}[x_\alpha]_{\sim}\cap[y]_\approx)\subseteq[x]_\approx\cap[y]_\approx=\emptyset$. Since
  $\bigcup_{\alpha<\kappa}[x_\alpha]_{\sim}$ has
  $\kappa\cdot\alephN\le\max(\kappa,\aleph_0)<\cont$ elements, we
  obtain $|L|<\cont$, contradicting $|L|\ge|L/{\approx}|=\cont$.
  Hence there exists an element $x_\kappa\in L$ with
  $[x_\alpha]_{\sim}\cap[x_\kappa]_\approx=\emptyset$ for all
  $\alpha<\kappa$.
\end{proof}

\begin{definition}\label{D-H}
  Let $u$, $v$, and $w$ be nonempty words with
  $|v|=|w|$ and $v\neq w$. Define an
  $\omega$-semigroup homomorphism $h:\{0,1\}^\infty\to\Sigma^\infty$
  by $h(0)=v$ and $h(1)=w$ and set
  \[
     H_{u,v,w}=u\cdot h(N)
  \]
  where $N$ is the set from Lemma~\ref{L-almost-disjoint}.
\end{definition}

\begin{lemma}\label{L-H}
  Let $u$, $v$, and $w$ be as in the previous
  definition. Then
  $H_{u,v,w}\in\cerCF\cap\text{\boldmath$\Lambda$}$.
\end{lemma}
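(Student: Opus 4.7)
The plan is to transfer the two properties of $N$ from Lemma~\ref{L-almost-disjoint} to $H_{u,v,w}$ through the map $\phi : N \to H_{u,v,w}$ given by $\phi(x) = u \cdot h(x)$. The structural fact I would rely on throughout is that, because $|v| = |w|$ and $v \neq w$, every $\omega$-word in $\{v,w\}^\omega$ factors uniquely into blocks of length $|v|$; in particular, $h$ is injective on $\{0,1\}^\omega$, so $\{v,w\}^\omega \setminus h(N) = h(\{0,1\}^\omega \setminus N)$.

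For $H_{u,v,w} \in \text{\boldmath$\Lambda$}$, I would compose the embedding $\mathbb{R} \hookrightarrow (N,\le_\lex)$ from Lemma~\ref{L-almost-disjoint} with $\phi$. The key observation is that if $x, y \in \{0,1\}^\omega$ first differ at position $n$, then $h(x)$ and $h(y)$ agree on their first $n|v|$ letters and then diverge in a $v$-block versus a $w$-block. Hence $\phi$ is strictly monotonic in $\le_\lex$, with direction determined by whether $v <_\lex w$ or $w <_\lex v$; in the order-reversing case I pre-compose with $x \mapsto -x$ on $\mathbb{R}$, using the self-duality of $(\mathbb{R},\le)$.

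For $H_{u,v,w} \in \cerCF$, I would decompose the complement as
\[
  \Sigma^\omega \setminus H_{u,v,w} \;=\; (\Sigma^\omega \setminus u\Sigma^\omega) \;\cup\; u \cdot \bigl((\Sigma^\omega \setminus \{v,w\}^\omega) \;\cup\; h(\{0,1\}^\omega \setminus N)\bigr),
\]
using the injectivity noted above to justify the third summand. The first two pieces are regular. For the third, start from a representation $\{0,1\}^\omega \setminus N = \bigcup_i U_i V_i^\omega$ witnessing $N \in \cerCF$ (so each $U_i$ is context-free and each $V_i$ regular), verify the identity $h(U_i V_i^\omega) = h(U_i) \cdot h(V_i)^\omega$, and invoke closure of context-free and regular languages under the monoid homomorphism induced by $h$ and under left-concatenation by the finite word $u$. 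Collecting everything exhibits $\Sigma^\omega \setminus H_{u,v,w}$ as a finite union $\bigcup_j U'_j (V'_j)^\omega$ with $U'_j$ context-free and $V'_j$ regular, placing it in $\erCF$.

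The only mildly delicate step is the case analysis on the lex-order between $v$ and $w$ for the $\text{\boldmath$\Lambda$}$ part, and this is entirely absorbed by the self-duality of $\mathbb{R}$. Everything else reduces to standard closure properties of $\REG$ and of the context-free languages under homomorphism and concatenation.
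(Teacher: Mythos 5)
Your proof is correct and takes essentially the same route as the paper: the same decomposition of $\Sigma^\omega\setminus H_{u,v,w}$ into the (regular) complement of $u\{v,w\}^\omega$ plus the injective image of $\{0,1\}^\omega\setminus N$, and the same lex-monotonicity argument for the {\boldmath$\Lambda$} part with the case split on whether $v<_\lex w$ or $w<_\lex v$. The only cosmetic difference is that the paper justifies closure of $\erCF$ under $x\mapsto u\cdot h(x)$ by realizing this map as a generalized sequential machine with B\"uchi acceptance, whereas you verify it by hand on the representation $\bigcup_i U_iV_i^\omega$; both are standard.
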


\begin{proof}
  Assume $v<_\lex w$. Then the mapping
  $\chi:\{0,1\}^\omega\to\Sigma^\omega:x\mapsto u h(x)$ (where
  $h$ is the homomorphism from the above definition) embeds
  $(N,\le_\lex)$ (and hence $(\mathbb R,\le)$)
  into~\mbox{$(H_{u,v,w},\le_\lex)$}.  If
  $w<_\lex v$, then
$(\mathbb R,\le)\cong(\mathbb R,\ge)
      \hookrightarrow (N,\ge_\lex)\hookrightarrow 
            (H_{\alpha,\beta,\gamma},\le_\lex)$.
  This proves that $H_{u,v,w}$ belongs to
  {\boldmath$\Lambda$}.

  Since $v\neq w$, the mapping $\chi$ is injective. Hence
  \[
    \Sigma^\omega\setminus H_{\alpha,\beta,\gamma}
      = \Sigma^\omega\setminus\chi(N)
      =\Sigma^\omega\setminus\chi(\{0,1\}^\omega)\cup
          \chi(\{0,1\}^\omega\setminus N)\ .
  \]
  Since $\chi$ can be realized by a generalized sequential
  machine with B\"uchi-acceptance, $\chi(\{0,1\}^\omega)$
  is regular and $\chi(\{0,1\}^\omega\setminus N)$ (as the image of an
  eventually regular context-free $\omega$-language) is eventually
  regular context-free. Hence $\Sigma^\omega\setminus
  H_{u,v,w}$ is eventually regular context-free.
\end{proof}

\begin{proposition}\label{P-ramsey}
  Let $G=(L,E_0,E_1,\dots,E_\ell)$ be some $(2,1+\ell)$-partition with
  injective $\omega$-automatic presentation $(L,\mathrm{id})$ such
  that $\{(x,y)\mid \{x,y\}\in E_0\}\cup\{(x,x)\mid x\in L\}$ is an
  equivalence relation on~$L$ (denoted $\approx$) with $\cont$
  equivalence classes. Then there exist nonempty words $u$,
  $v$, and $w$ with $v$ and $w$ distinct, but of the
  same length, such that $H_{u,v,w}$ is $i$-homogeneous
  for some $1\le i\le\ell$.
\end{proposition}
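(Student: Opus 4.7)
The plan is to combine Lemma~\ref{L-listing} with an algebraic reformulation of the automatic structure and a Ramsey/pumping extraction of the witness words $u,v,w$.

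First, apply Lemma~\ref{L-listing} to the equivalence relations $\sim_e$ (whose classes on $\Sigma^\omega$ are countable) and $\approx$ (with $\cont$ classes). This yields a family $(x_\alpha)_{\alpha<\cont}\subseteq L$ such that for all $\alpha<\beta$, no $\sim_e$-neighbour of $x_\alpha$ is $\approx$-equivalent to $x_\beta$. In particular every pair $\{x_\alpha,x_\beta\}$ avoids $E_0$, so it lies in some $E_i$ with $1\le i\le\ell$.

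Next, by the algebraic characterisation of regular $\omega$-languages, fix a finite $\omega$-semigroup $S$ and a homomorphism $\eta:(\Sigma\times\Sigma)^\infty\to S$ with distinguished subsets $T_1,\dots,T_\ell\subseteq S_\omega$ such that, for distinct $x,y\in L$, we have $\{x,y\}\in E_i$ iff $\eta(x\otimes y)\in T_i$; a second homomorphism $\eta_L:\Sigma^\infty\to S_L$ with acceptance set $T_L\subseteq S_{L,\omega}$ captures $L$-membership. Classical Ramsey applied to the colouring $\{\alpha,\beta\}\mapsto\eta(x_\alpha\otimes x_\beta)$ thins $(x_\alpha)$ to an infinite subsequence on which this pair-image is a constant $\tau\in T_i$ for a fixed $i\ge 1$. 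In the product of $S$ and two copies of $S_L$, apply Simon-style idempotent factorisation to the convolutions $x_\alpha\otimes x_\beta$ and pigeonhole on the tree of shared prefixes of the thinned family; a branching in this tree with matching idempotent images on both diagonal and off-diagonal blocks yields a nonempty prefix $u$ and two equal-length distinct words $v,w$ such that $\eta_L(v)=\eta_L(w)=:f$ is idempotent with $\eta_L(u)*\pi(f,f,\dots)\in T_L$ (so $u\cdot\{v,w\}^\omega\subseteq L$, in particular $u\cdot h(N)\subseteq L$), while $\eta(v\otimes v)=\eta(v\otimes w)=\eta(w\otimes v)=\eta(w\otimes w)=:e$ is a common idempotent with $\eta(u\otimes u)*\pi(e,e,\dots)=\tau$.

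From here $i$-homogeneity is immediate: by Lemma~\ref{L-almost-disjoint}, for distinct $x,y\in N$ the set $\supp(x)\cap\supp(y)$ is finite, so the convolution $uh(x)\otimes uh(y)$ factors as $(u\otimes u)$ followed by an $\omega$-sequence of blocks from $\{v\otimes v,v\otimes w,w\otimes v,w\otimes w\}$ with only finitely many $(w\otimes w)$-blocks, each mapped by $\eta$ to $e$. Hence $\eta(uh(x)\otimes uh(y))=\eta(u\otimes u)*\pi(e,e,\dots)=\tau\in T_i$, and every pair in $H_{u,v,w}$ lies in $E_i$. \textbf{The main obstacle} is the third stage: enforcing the off-diagonal equalities $\eta(v\otimes v)=\eta(v\otimes w)$ forces $v$ and $w$ to come from two distinct branches of the prefix tree of $(x_\alpha)$ that pump synchronously in both diagonal and off-diagonal semigroup products. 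Synchronising these Ramsey-type factorisations, while simultaneously arranging equal lengths, $v\ne w$, and $L$-membership of $u\cdot\{v,w\}^\omega$, is the technical crux of the argument.
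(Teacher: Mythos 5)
Your overall skeleton (the listing lemma, recognition by finite $\omega$-semigroups, extraction of $u,v,w$, and homogeneity of $H_{u,v,w}$ via the almost-disjointness of supports in $N$) matches the paper's. But the step you yourself flag as the technical crux is not merely hard -- as you have set it up, it is provably impossible. You require $\eta(v\otimes v)=\eta(v\otimes w)=\eta(w\otimes v)=\eta(w\otimes w)=e$ with $u\{v,w\}^\omega\subseteq L$, $v\neq w$, $|v|=|w|$. Then $uv^\omega$ and $uw^\omega$ are distinct elements of $L$, so $\{uv^\omega,uw^\omega\}\in E_j$ for some $j$ (the $E_i$ partition $[L]^2$), while $uv^\omega\otimes uw^\omega=(u\otimes u)(v\otimes w)^\omega$ and $uv^\omega\otimes uv^\omega=(u\otimes u)(v\otimes v)^\omega$ would have the same image under the recognizing homomorphism. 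Recognition then forces the diagonal pair $(uv^\omega,uv^\omega)$ into the relation $\{(x,y)\mid\{x,y\}\in E_j\}$, i.e.\ the singleton $\{uv^\omega\}$ into $E_j\subseteq[L]^2$ -- a contradiction. The same obstruction appears concretely when $\approx$ is the ultimately-equal relation $\sim_e$: no homomorphism recognizing $\sim_e$ can identify $v\otimes v$ with $v\otimes w$. So no amount of Ramsey thinning or idempotent factorisation will produce your four-fold equality, and the final computation $\eta(uh(x)\otimes uh(y))=\eta(u\otimes u)*\pi(e,e,\dots)=\tau$ collapses.

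The paper circumvents exactly this by \emph{not} asking for a single constant value. For distinct $x,y\in N$ the supports are almost disjoint and both differences $\supp(x)\setminus\supp(y)$, $\supp(y)\setminus\supp(x)$ are infinite, so after a finite prefix only mixed blocks $v\otimes w$ and $w\otimes v$ occur infinitely often; the imported construction of \cite[Sections 3.1--3.3, Lemma 3.4]{BarKR08} builds $u,v,w$ (from the words $y_1,y_2$ and the cut points $g_1<g_2<\dots$) so that $\delta(\chi(x)\otimes\chi(y))$ is pinned down to one of the \emph{two} values $\delta(x_{\bullet\circ}\otimes x_{\circ\bullet})$ and $\delta(x_{\circ\bullet}\otimes x_{\bullet\circ})$ -- which may differ, but both certify membership of the unordered pair in the same class $E_i$ since they come from the single 2-set $\{x_{\bullet\circ},x_{\circ\bullet}\}$. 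Moreover, you use only the weak consequence of Lemma~\ref{L-listing} that pairs $\{x_\alpha,x_\beta\}$ avoid $E_0$; the full condition $[x_\alpha]_{\sim_e}\cap[x_\beta]_\approx=\emptyset$ is what \cite[Lemma 3.5]{BarKR08} needs to conclude $x_{\bullet\circ}\not\approx x_{\circ\bullet}$, which is the actual reason the colour $i$ of $H_{u,v,w}$ satisfies $i>0$; your argument gives no route from ``the thinned family avoids $E_0$'' to ``the new words $\chi(x),\chi(y)$ avoid $E_0$''.
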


\begin{proof}
  There are finite $\omega$-semigroups $S$ and $T$ and homomorphisms
  $\gamma:\Sigma^\infty\to S$ and
  $\delta:(\Sigma\times\Sigma)^\infty\to T$ such that
  \begin{enumerate}[(a)]
  \item $x\in L$, $y\in\Sigma^\omega$, and $\gamma(x)=\gamma(y)$ imply
    $y\in L$ and
  \item $x,x',y,y'\in L$, $\{h(x),h(x')\}\in E_i$, and
    $\delta(x,x')=\delta(y,y')$ imply $\{h(y),h(y')\}\in E_i$ (for all
    $0\le i\le\ell$).
  \end{enumerate}
  By Lemma~\ref{L-listing}, there are words
  $(x_\alpha)_{\alpha<\cont}$ in $L$ such that
  $[x_\alpha]_{\sim_e}\cap[x_\beta]_\approx=\emptyset$ for all
  $\alpha<\beta$.

  In the following, we only need the words $x_0,x_1,\dots,x_C$ with
  $C=|S|\cdot|T|$. Then \cite[Sections 3.1-3.3]{BarKR08}\footnote{The
    authors of \cite{BarKR08} require
    $[x_i]_{\sim_e}\cap[x_j]_\approx=\emptyset$ for all $0\le i,j\le
    C$ distinct, but they use it only for $i<j$. Hence we can apply
    their result here.} first constructs two $\omega$-words $y_1$ and
  $y_2$ and an infinite sequence $1\le g_1<g_2<\dots$ of natural
  numbers such that in particular $y_1[g_1,g_2)<_\lex
  y_2[g_1,g_2)$. Set $u=y_2[0,g_1)$, $v=y_1[g_1,g_2)$, and
  $w=y_2[g_1,g_2)$. In the following, let
  $h:\{0,1\}^\infty\to\Sigma^\infty$ be the homomorphism from
  Def.~\ref{D-H} and set $\chi(x)= u h(x)$ for $x\in\{0,1\}^*$.  As
  in \cite{BarKR08}, one can then show that all the words from $H_{u,v,w}$
  belong to the $\omega$-language~$L$. In the following, set
  $x_{\circ\bullet}=\chi((01)^\omega)$ and
  $x_{\bullet\circ}=\chi((10)^\omega)$. Then obvious alterations in
  the proofs by B{\'a}r{\'a}ny et al.\ show:
  \begin{enumerate}[(1)]
  \item \cite[Lemma 3.4]{BarKR08}\footnote{The authors of
      \cite{BarKR08} only require one of the two differences to be
      infinite, but the proof uses that they both are infinite.}  If
    $x,y\in\{0,1\}^\omega$ with $\supp(x)\setminus\supp(y)$ and
    $\supp(y)\setminus\supp(x)$ infinite, then
    \[
       \{\delta(\chi(x),\chi(y)),\delta(\chi(y),\chi(x))\}
        =\{\delta(x_{\bullet\circ},x_{\circ\bullet}),
       \delta(x_{\circ\bullet},x_{\bullet\circ})\}\ .
    \]
  \item \cite[Lemma 3.5]{BarKR08} $x_{\bullet\circ}\not\approx
    x_{\circ\bullet}$.
  \end{enumerate}
  There exists $0\le i\le \ell$ with
  $\{x_{\bullet\circ},x_{\circ\bullet}\}\in E_i$. Then (2) implies
  $i>0$.

  Let $x,y\in N$ be distinct. Then $\supp(x)\cap\supp(y)$ is finite by
  Lemma~\ref{L-almost-disjoint}. Since, on the other hand, $\supp(x)$
  and $\supp(y)$ are both infinite, the two differences
  $\supp(x)\setminus\supp(y)$ and $\supp(y)\setminus\supp(x)$ are
  infinite. Hence we obtain $\delta(\chi(x),\chi(y))\in
  \{\delta(x_{\bullet\circ},x_{\circ\bullet}),
  \delta(x_{\circ\bullet},x_{\bullet\circ})\}$ from (1).  Hence (b)
  implies $\{\chi(x),\chi(y)\}\in E_i$, i.e., $H_{u,v,w}$ is $E_i$-homogeneous. 

  Since $H_{u,v,w}\in\cerCF\cap\text{\boldmath$\Lambda$}$ by Lemma~\ref{L-H}, the result follows.
\end{proof}

\begin{proposition}\label{P-ramsey2}
  Let $G=(V,E_1',\dots,E_\ell')$ be some $(2,\ell)$-partition with
  automatic presentation $(L,h)$. Then there exist $u,v,w\in\Sigma^+$
  with $v$ and $w$ distinct of equal length such that $h(H_{u,v,w})$
  is homogeneous and of size $\cont$.
\end{proposition}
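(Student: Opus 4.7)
The plan is to reduce Prop.~\ref{P-ramsey2} to Prop.~\ref{P-ramsey} by promoting the fibre equivalence of $h$ to an auxiliary colour. Concretely, I would consider the $(2,1+\ell)$-partition $G^*=(L,E_0,E_1,\dots,E_\ell)$ on the regular $\omega$-language $L$ itself, where
\[
 E_0=\{\{x,y\}\in [L]^2\mid h(x)=h(y)\}
\]
and, for $1\le i\le\ell$,
\[
 E_i=\{\{x,y\}\in [L]^2\mid \{h(x),h(y)\}\in E_i'\}\, .
\]
For a pair $\{x,y\}\in[L]^2$, either $h(x)=h(y)$ (in which case it lies in $E_0$) or $\{h(x),h(y)\}$ is a genuine $2$-element subset of $V$, which sits in exactly one of $E_1',\dots,E_\ell'$. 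Hence $(E_0,E_1,\dots,E_\ell)$ is a partition of $[L]^2$.

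Next I would check that $(L,\mathrm{id})$ is an injective $\omega$-automatic presentation of $G^*$ satisfying the hypothesis of Prop.~\ref{P-ramsey}. The relation $\{(x,y)\in L^2\mid \{x,y\}\in E_0\}$ is the off-diagonal part of $R_\approx$, and this relation together with $\{(x,x)\mid x\in L\}$ is exactly $R_\approx$, an equivalence relation on $L$, with $|L/R_\approx|=|h(L)|=|V|=\cont$ equivalence classes. The relations underlying $E_1,\dots,E_\ell$ are $\omega$-automatic because $(L,h)$ is an $\omega$-automatic presentation of $G$, and $L$ is regular. Thus all the preconditions of Prop.~\ref{P-ramsey} are met.

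Applying Prop.~\ref{P-ramsey} to $G^*$ yields non-empty words $u,v,w$ with $|v|=|w|$ and $v\neq w$, and some $1\le i\le\ell$, such that $H_{u,v,w}$ is $E_i$-homogeneous. Because $i\ge 1$, any two distinct $x,y\in H_{u,v,w}$ satisfy $\{h(x),h(y)\}\in E_i'$; in particular $h(x)\neq h(y)$. Thus $h$ restricts to an injection on $H_{u,v,w}$, and $h(H_{u,v,w})$ is $E_i'$-homogeneous in~$G$. By Lemma~\ref{L-H}, $H_{u,v,w}\in\text{\boldmath$\Lambda$}$, so $(\mathbb R,\le)$ embeds into $(H_{u,v,w},\le_\lex)$, which forces $|H_{u,v,w}|=\cont$ and hence $|h(H_{u,v,w})|=\cont$.

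There is no real obstacle here; the only conceptual step is the idea of absorbing the $h$-fibre equivalence into a new zeroth colour so that Prop.~\ref{P-ramsey}, stated for injective presentations, can be invoked directly. The remaining work (partition property, $\omega$-automaticity of the $E_i$, injectivity of $h$ on the homogeneous set, cardinality via $\text{\boldmath$\Lambda$}$) is routine bookkeeping.
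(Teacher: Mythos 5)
Your proposal is correct and follows essentially the same route as the paper: it introduces the same auxiliary colour $E_0$ for the $h$-fibres, applies Prop.~\ref{P-ramsey} to the resulting injective $(2,1+\ell)$-partition on $L$, and recovers injectivity of $h$ on $H_{u,v,w}$ from the fact that the homogeneous colour is not $E_0$. The only difference is that you spell out the verification of the hypotheses of Prop.~\ref{P-ramsey} and the cardinality argument via {\boldmath$\Lambda$} slightly more explicitly than the paper does.
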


\begin{proof}
  To apply Prop.~\ref{P-ramsey}, consider the following
  $(2,1+\ell)$-partition $G=(L,E_0,\dots,E_\ell)$:
  \begin{itemize}
  \item The underlying set is the $\omega$-language~$L$,
  \item $E_0$ comprises all sets $\{x,y\}$ with $h(x)=h(y)$ and $x\neq y$, and
  \item $E_i$ (for $1\le i\le\ell$) comprises all sets $\{x,y\}$ with
    $\{h(x),h(y)\}\in E_i'$.
  \end{itemize}
  Then $(L,\mathrm{id})$ is an injective $\omega$-automatic
  presentation of the $(2,1+\ell)$-partition~$G$. By
  Prop.~\ref{P-ramsey}, there exists $1\le i\le\ell$ and words $u$,
  $v$ and $w$ such that $H_{u,v,w}$ is $i$-homogeneous in~$G$. Since
  $(E_0,\dots,E_\ell)$ is a partition of $[L]^2$, we have
  $\{x,y\}\notin E_0$ (and therefore $h(x)\neq h(y)$) for all $x,y\in
  H_{u,v,w}$ distinct. Hence
  $h$ is injective on $H_{u,v,w}$. Furthermore
  $[H_{u,v,w}]^2\subseteq E_i$ implies $[h(H_{u,v,w})]^2\subseteq
  E_i'$. Hence $h(H_{u,v,w})$ is an $i$-homogeneous set in $G'$ of
  size $\cont$.
\end{proof}

This finishes the proof of Theorem~\ref{T-uncountable}. 

\subsubsection{Effectiveness}

Note that the proof above is non-constructive at several points:
Lemma \ref{L-listing} is not constructive and the proof proper uses
Ramsey's theorem \cite[page 390]{BarKR08} and makes a Ramseyan
factorisation coarser \cite[begin of section 3.2]{BarKR08}. We now
show that nevertheless
the words $u$, $v$, and $w$ can be computed. By Prop.~\ref{P-ramsey2},
it suffices to decide for a given triple $(u,v,w)$ whether
$h(H_{u,v,w})$ is $i$-homogeneous for some fixed $1\le i\le \ell$.

To be more precise, let $(V,E_1,\dots,E_\ell)$ be some
$(2,\ell)$-partition with $\omega$-automatic presentation $(L,h)$. Furthermore, let
$u,v,w\in\Sigma^+$ with $v\neq w$ of the same length and write $H$ for
$H_{u,v,w}$. We have to decide whether $H\subseteq L$ and $H\otimes
H\subseteq L_i\cup L_=$. Note that $H\subseteq L$ iff
$L\cap\Sigma^\omega\setminus H=\emptyset$. But $\Sigma^\omega\setminus
H$ is context-free, so the intersection is context-free. Hence the
emptiness of the intersection can be decided. 

Towards a decision of the second requirement, note that 
\begin{align*}
  (\Sigma\times\Sigma)^\omega\setminus (H\otimes H)
    & = (\Sigma^\omega\setminus H\otimes\Sigma^\omega)
       \cup
        (\Sigma^\omega\cup\Sigma^\omega\setminus H)
\end{align*}
is the union of two context-free $\omega$-languages and therefore
context-free itself. Since $L_i\cup L_=$ is regular, the intersection
$(L_i\cup L_=)\cap(\Sigma\times\Sigma)^\omega\setminus(H\otimes H)$ is
context-free implying that its emptiness is decidable. But this
emptiness is equivalent to $H\otimes H\subseteq L_1\cup L_=$.

\subsubsection{$\omega$-automatic partial orders}

>From Theorem~\ref{T-uncountable}, we now derive a necessary condition
for a partial order of size~$\cont$ to be $\omega$-automatic. A
partial order $(V,\sqsubseteq)$ is $\omega$-automatic iff there exists
a regular $\omega$-language~$L$ and a surjection $h:L\to V$ such that
the relations $R_==\{(x,y)\in L^2\mid h(x)=h(y)\}$ and
$R_\sqsubseteq=\{(x,y)\in L^2\mid h(x)\sqsubseteq h(y)\}$ are
$\omega$-automatic.

\begin{corollary}[\cite{BarKR08}\footnote{As pointed out by two
    referees, the paragraph before Sect.~4.1 in \cite{BarKR08} already
    hints at this result, although in a rather implicit
    way.}\addtocounter{footnote}{-1}]
  If $(V,\sqsubseteq)$ is an $\omega$-automatic partial order with
  $|V|\ge\aleph_1$, then $(\mathbb R,\le)$ or an antichain of
  size~$\cont$ embeds into $(V,\sqsubseteq)$.
\end{corollary}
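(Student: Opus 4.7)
The plan is to derive this corollary from Theorem~\ref{T-uncountable}, or more precisely from its key ingredient Prop.~\ref{P-ramsey}. First, since $(V,\sqsubseteq)$ is $\omega$-automatic and $|V|\ge\aleph_1$, the size dichotomy from~\cite{BarKR08} recalled in the preliminaries forces $|V|=\cont$. I fix any $\omega$-automatic presentation $(L,h)$ of $(V,\sqsubseteq)$.

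Rather than applying Theorem~\ref{T-uncountable} to the coarse $(2,2)$-partition of~$V$ into comparable and incomparable pairs, I will apply Prop.~\ref{P-ramsey} to a finer $(2,1+3)$-partition of the presentation~$L$ that also records whether $\le_\lex$ agrees with~$\sqsubseteq$. Concretely, set $E_0=\{\{x,y\}\in[L]^2\mid h(x)=h(y)\}$, $E_1=\{\{x,y\}\in[L]^2\mid h(x),h(y)\text{ are }\sqsubseteq\text{-incomparable}\}$, $E_2=\{\{x,y\}\mid h(x)\neq h(y)\text{ and }(x<_\lex y\Leftrightarrow h(x)\sqsubset h(y))\}$, and $E_3=\{\{x,y\}\mid h(x)\neq h(y)\text{ and }(x<_\lex y\Leftrightarrow h(y)\sqsubset h(x))\}$. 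Each $E_i$ is a Boolean combination of the $\omega$-automatic relations $R_\approx$, $\le_\lex$, and the pullback of~$\sqsubseteq$ along~$h$, hence itself $\omega$-automatic; the four classes partition $[L]^2$; and $E_0$ together with the diagonal is an equivalence relation on~$L$ with $|V|=\cont$ classes. So Prop.~\ref{P-ramsey} applies and yields nonempty $u,v,w$ with $|v|=|w|$ and $v\neq w$ such that $H=H_{u,v,w}$ is $E_i$-homogeneous for some $i\in\{1,2,3\}$. Lemma~\ref{L-H} gives $H\in\text{\boldmath$\Lambda$}$, so $(\mathbb R,\le)$ embeds into $(H,\le_\lex)$, and the exclusion $i\neq 0$ forces $h|_H$ to be injective, whence $|h(H)|=\cont$.

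A case distinction finishes the proof. If $i=1$, then $h(H)$ is an antichain of size~$\cont$. If $i=2$, the map $h|_H$ is an order embedding $(H,\le_\lex)\hookrightarrow(V,\sqsubseteq)$, and precomposing with $(\mathbb R,\le)\hookrightarrow(H,\le_\lex)$ supplied by $H\in\text{\boldmath$\Lambda$}$ yields $(\mathbb R,\le)\hookrightarrow(V,\sqsubseteq)$. If $i=3$, the same argument works with $h|_H$ order-reversing, using $(\mathbb R,\le)\cong(\mathbb R,\ge)$.

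The subtle point is precisely the choice of refinement in the second paragraph. A naive $(2,2)$-partition into comparable versus incomparable pairs would, in the ``comparable'' alternative, only supply an $\omega$-automatic chain of size~$\cont$ with no obvious reason to contain a copy of~$\mathbb R$: $H\in\text{\boldmath$\Lambda$}$ records the lex structure of~$H$, which is a priori unrelated to~$\sqsubseteq$ on $h(H)$. Splitting the comparable pairs further by whether $\le_\lex$ and $\sqsubseteq$ agree is exactly what transports the $\mathbb R$-embedding of Lemma~\ref{L-H} along~$h$. Everything else reduces to routine $\omega$-automaticity checks and to verifying the hypotheses of Prop.~\ref{P-ramsey}.
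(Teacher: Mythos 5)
Your proposal is correct and follows essentially the same route as the paper: both refine the comparability relation into a $(2,4)$-partition of $L$ that separates comparable pairs according to whether $\sqsubseteq$ agrees or disagrees with $\le_\lex$, apply Prop.~\ref{P-ramsey} to get an $H_{u,v,w}\in\text{\boldmath$\Lambda$}$ homogeneous for a class other than $E_0$, and then read off either the antichain or the copy of $(\mathbb R,\le)$ (or its reverse). The only quibble is that your biconditional defining $E_2$ and $E_3$ is not symmetric in $x$ and $y$ when $h(x)$ and $h(y)$ are incomparable, so you should restrict those two classes to comparable pairs explicitly (as the paper does by writing $h(x)\sqsubset h(y)$ and $x<_\lex y$, resp.\ $h(x)\sqsupset h(y)$ and $x<_\lex y$); with that reading your four classes do partition $[L]^2$ and the argument goes through verbatim.
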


\begin{proof}
  Let $(V,\sqsubseteq)$ be a partial order, $L\subseteq\Sigma^\omega$
  a regular $\omega$-language and $h:L\to V$ a surjection such that
  $R_=$ and $R_\sqsubseteq$ are $\omega$-automatic. Define an injective
  $\omega$-automatic $(2,4)$-partition $G=(L,E_0,E_1,E_2,E_3)$:
  \begin{itemize}
   \item $E_0$ comprises all pairs $\{x,y\}\in[L]^2$ with $h(x)=h(y)$,
   \item $E_1$ comprises all pairs $\{x,y\}\in[L]^2$ with
     $h(x)\sqsubset h(y)$ and $x<_\lex y$, 
   \item $E_2$ comprises all pairs $\{x,y\}\in[L]^2$ with
     $h(x)\sqsupset h(y)$ and $x<_\lex y$, and
   \item $E_3=[L]^2\setminus(E_0\cup E_1\cup E_2)$ comprises all pairs
     $\{x,y\}\in[L]^2$ such that $h(x)$ and $h(y)$ are incomparable.
  \end{itemize}
  From $|L|\ge|V|>\alephN$, we obtain $|L|=\cont$. Hence, by
  Prop.~\ref{P-ramsey}, there exists $H\subseteq L$ $1$-, $2$- or
  $3$-homogeneous with $(\mathbb
  R,\le)\hookrightarrow(H,\le_\lex)$. Since $[H]^2\subseteq E_1\cup
  E_2\cup E_3$ and since $G$ is a partition of $L$, the mapping $h$
  acts injectively on~$H$. If $[H]^2\subseteq E_1$ (the case
  $[H]^2\subseteq E_2$ is symmetrical) then $(\mathbb
  R,\le)\hookrightarrow(H,\le_\lex)\cong(h(H),\sqsubseteq)$. If
  $[H]^2\subseteq E_3$, then $h(H)$ is an antichain of size~$\cont$.
\end{proof}

A linear order $(L,\sqsubseteq)$ is \emph{scattered} if $(\mathbb
Q,\le)$ cannot be embedded into $(L,\sqsubseteq)$. Automatic partial
orders are defined similarly to $\omega$-automatic partial orders with
the help of finite automata instead of B\"uchi-automata.

\begin{corollary}[\cite{BarKR08}\footnotemark]
  Any scattered $\omega$-automatic linear order $(V,\sqsubseteq)$ is
  countable. Hence,
  \begin{itemize}
  \item a scattered linear order is $\omega$-automatic if and only if
    it is automatic, and
  \item an ordinal~$\alpha$ is $\omega$-automatic if and only if
    $\alpha<\omega^\omega$.
  \end{itemize}
\end{corollary}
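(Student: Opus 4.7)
The plan is to derive all three assertions directly from the previous corollary and well-known facts. For the main claim, I first note that a linear order contains no two incomparable elements, so it certainly contains no antichain of size~$\cont$. Hence, if $(V,\sqsubseteq)$ were an $\omega$-automatic linear order with $|V|\ge\aleph_1$, the previous corollary would force $(\mathbb R,\le)\hookrightarrow(V,\sqsubseteq)$. Since $(\mathbb Q,\le)\hookrightarrow(\mathbb R,\le)$, this would contradict the scatteredness assumption. Therefore every scattered $\omega$-automatic linear order must be countable.

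For the first bullet point, I would argue exactly as in the proof of Theorem~\ref{T-countable}(a): given a countable $\omega$-automatic linear order with presentation $(L,h)$, I would invoke \cite{BarKR08} to pass to a regular subset $L'\subseteq L$ on which $h$ is injective, and then apply the result of \cite{Blu99} that an injective $\omega$-automatic presentation of a countable structure can be effectively turned into a finite-automatic presentation of the same structure. Combined with the trivial direction (every automatic presentation is in particular $\omega$-automatic), this gives the equivalence.

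For the second bullet point, I would observe that every ordinal is scattered, because an embedding of $(\mathbb Q,\le)$ into an ordinal would produce an infinite strictly decreasing sequence, contradicting well-foundedness. Hence any $\omega$-automatic ordinal is scattered and therefore, by the first bullet, automatic. The classical characterisation of automatic ordinals (Delhomm\'e, cf.~\cite{KhoN95}) states that an ordinal is automatic iff it is strictly less than $\omega^\omega$, which completes the argument.

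The only conceptual work is the opening linear-order observation; the rest is a matter of chaining established results. The main subtlety is citational rather than mathematical: one must be careful that the reduction ``countable $\omega$-automatic $\Rightarrow$ automatic'' is indeed available in the form needed (it is, via the combination of \cite{BarKR08} and \cite{Blu99} used earlier in the paper), and that the Delhomm\'e characterisation of automatic ordinals is correctly attributed.
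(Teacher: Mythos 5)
Your proposal is correct and follows essentially the same route as the paper: observe that a linear order has no nontrivial antichain, so the previous corollary forces an embedding of $(\mathbb R,\le)$ and hence of $(\mathbb Q,\le)$ into any uncountable $\omega$-automatic linear order, and then chain ``countable $\omega$-automatic implies automatic'' (\cite{BarKR08}, via \cite{Blu99}) with Delhomm\'e's characterisation of automatic ordinals. The only nit is citational: the ordinal characterisation should be attributed to \cite{Del04}, not \cite{KhoN95}.
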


\begin{proof}
  If $(V,\sqsubseteq)$ is not countable, then it embeds $(\mathbb
  R,\le)$ by the previous corollary and therefore in particular
  $(\mathbb Q,\le)$.  The remaining two claims follow immediately from
  \cite{BarKR08} (``countable $\omega$-automatic structures are
  automatic'') and \cite{Del04} (``an ordinal is automatic iff it is
  properly smaller than $\omega^\omega$''), resp.
\end{proof}

Contrast Theorem~\ref{T-uncountable} with Theorem~\ref{T-S}: any
uncountable $\omega$-automatic $(k,\ell)$-partition contains an
uncountable homogeneous set of size $\cont$. But we were able to prove
this for $k=2$, only. One would also wish the homogeneous set to be
regular and not just from $\cerCF$. We now prove that these two
shortcomings are unavoidable: Theorem~\ref{T-uncountable} does not
hold for $k=3$ nor is there always an $\omega$-regular homogeneous
set. These negative results hold even for injective presentations.

\subsection{A Sierpi\'nski theorem for $\omega$-automatic
  $(k,\ell)$-partitions with $k\ge3$}

We first concentrate on the question whether some form of
Theorem~\ref{T-uncountable} holds for $k\ge3$. The following lemma
gives the central counterexample for $k=3$ and $\ell=2$, the below
theorem then derives the general result.

\begin{lemma}\label{L-S2}
  $(\cont,\omega\mathsf{iA})\not\ramsey(\aleph_1,\oLANG)^3_2$.
\end{lemma}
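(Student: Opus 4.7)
The plan is to construct the counterexample previewed in the introduction. I would take $V = L = \{0,1\}^\omega$, so that $(L,\mathrm{id})$ is an injective presentation with $|V| = \cont$. Writing $x \wedge y \in \{0,1\}^*$ for the longest common prefix of $x$ and $y$, I would declare $\{x,y,z\} \in E_1$ iff, for the ordering $x' <_\lex y' <_\lex z'$ of the three elements, $|x' \wedge y'| < |y' \wedge z'|$, and $\{x,y,z\} \in E_2$ otherwise. Equivalently: at the first position $p$ where the three words do not agree, exactly one of them carries a letter distinct from the other two; the triple belongs to $E_1$ if that odd letter is $0$ and to $E_2$ if it is $1$. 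Note that $|x' \wedge y'| = |y' \wedge z'|$ cannot arise over a binary alphabet, since at that common position the middle word would need a letter both strictly greater than that of $x'$ and strictly less than that of $z'$; so $(E_1,E_2)$ is a bona fide partition of $[V]^3$. The odd-letter condition is decidable by a three-track synchronous B\"uchi automaton, which yields injective $\omega$-automaticity.

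The non-trivial task is ruling out homogeneous sets of size $\aleph_1$. Assume $H \subseteq \{0,1\}^\omega$ is $E_1$-homogeneous with $|H| \ge \aleph_1$. For each $y \in H$ that is not the $\le_\lex$-maximum of $H$, I would pick some witness $z_y \in H$ with $z_y >_\lex y$ and set $n_y := |y \wedge z_y| \in \bN$. Homogeneity applied to $\{x,y,z_y\} \in E_1$, for any $x \in H$ with $x <_\lex y$, then forces $|x \wedge y| < n_y$. Decomposing $H$ (minus its at most one $\le_\lex$-maximum) as $\bigcup_{n \in \bN} H_n$ with $H_n := \{y : n_y \le n\}$, pigeonhole supplies an uncountable~$H_n$.

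The cheap observation that closes the proof is this: for any distinct $y <_\lex y'$ in $H_n$ one has $|y \wedge y'| < n_{y'} \le n$, so already the length-$n$ prefix of $y$ differs from that of $y'$; hence $|H_n| \le 2^n$, contradicting uncountability. The case of $E_2$-homogeneous $H$ is symmetric: for each non-minimum $y \in H$ I would pick $x_y \in H$ with $x_y <_\lex y$ and use $n_y := |x_y \wedge y|$ to bound $|y \wedge z|$ for $z >_\lex y$ in $H$; the same pigeonhole argument applies after swapping roles.

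I do not anticipate any serious obstacle. Morally, $E_i$-homogeneity forces each $y \in H$ (away from the relevant extremum) to be isolated from one side in the canonical ultrametric $d(x,y) = 2^{-|x \wedge y|}$, and the resulting ``one-sided discreteness'' makes $H$ countable in a single pigeonhole step.
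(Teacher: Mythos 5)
Your counterexample is exactly the paper's: $V=L=\{0,1\}^\omega$ with the identity presentation, and a triple $\{x,y,z\}$ with $x<_\lex y<_\lex z$ placed in $E_1$ iff $x\wedge y$ is a proper prefix of $y\wedge z$ (your length comparison is equivalent, since both meets are prefixes of $y$), and your ``odd letter at the first three-way disagreement'' reformulation is a nice way to make the $\omega$-automaticity explicit, which the paper leaves implicit. Where you genuinely diverge is in the proof that no uncountable homogeneous set exists. The paper argues structurally along the tree: for an infinite $E_1$-homogeneous $H$ it shows that at every branching prefix $u$ the left part $H\cap u0\Sigma^\omega$ is a singleton, then builds an increasing chain of prefixes $u_0<_\pref u_1<_\pref\cdots$ by always descending into the right (infinite) part, and writes $H$ as a countable union of these singletons plus at most one limit point. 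You instead attach to each non-maximal $y$ a witness depth $n_y=|y\wedge z_y|$, observe that homogeneity forces every $x<_\lex y$ in $H$ to split from $y$ before depth $n_y$, and conclude by pigeonhole that some depth class $H_n$ would be uncountable while in fact $|H_n|\le 2^n$. Both arguments prove the stronger fact that every homogeneous set is countable; yours avoids the inductive construction of the chain $u_0,u_1,\dots$ and even gives the quantitative bound $|H_n|\le 2^n$, while the paper's decomposition makes the order-theoretic shape of a maximal $E_1$-homogeneous set (an $\omega$-sequence of isolated points plus one limit) more visible. Your handling of the $E_2$ case, including the observation that over a binary alphabet the meets can never have equal length so the inequality reverses strictly, matches the paper's symmetry argument. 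I see no gap.
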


\begin{proof}
  Let $\Sigma=\{0,1\}$, $V=L=\{0,1\}^\omega$. Furthermore, for
  $H\subseteq L$, we write $\bigwedge H\in\Sigma^\infty$ for the
  longest common prefix of all $\omega$-words in~$H$,
  $\bigwedge\{x,y\}$ is also written $x\wedge y$. Then let $E_1$
  consist of all 3-sets $\{x,y,z\}\in [L]^3$ with $x<_\lex y<_\lex z$
  and $x\wedge y<_\pref y\wedge z$; $E_2$ is the complement
  of~$E_1$. This finishes the construction of the $(3,2)$-partition
  $(V,E_1,E_2)$ of size $\cont$ with injective $\omega$-automatic
  presentation~$(L,\mathrm{id})$.

  Note that $1^*0^\omega$ is a countable $E_1$-homogeneous set and
  that $0^*1^\omega$ is a countable $E_2$-homogeneous set. But there
  is no uncountable homogeneous set: First suppose $H\subseteq L$ is
  infinite and $x\wedge y<_\pref y\wedge z$ for all $x<_\lex y<_\lex
  z$ from~$H$. Let $u\in\Sigma^*$ such that $H\cap u0\Sigma^\omega$
  and $H\cap u1\Sigma^\omega$ are both nonempty and let $x,y\in H\cap
  u0\Sigma^\omega$ with $x\le_\lex y$ and $z\in H\cap
  u1\Sigma^\omega$. Then $x\wedge y>_\pref u=y\wedge z$ and therefore
  $x=y$ (for otherwise, we would have $x<_\lex y<_\lex z$ in $H$ with
  $x\wedge y>_\pref y\wedge z$). Hence we showed $|H\cap
  u0\Sigma^\omega|=1$. Let $u_0=\bigwedge H$ and $H_1=H\cap
  u_01\Sigma^\omega$. Since $H\cap u_00\Sigma^\omega$ is finite, the
  set $H_1$ is infinite. We proceed by induction: $u_n=\bigwedge H_n$
  and $H_{n+1}=H_n\cap u_n1\Sigma^\omega$ satisfying $|H_n\cap
  u_n0\Sigma^\omega|=1$. Then $u_0<_\pref u_01\le_\pref u_1<_\pref
  u_11\le_\pref u_2\cdots$ with
  \[
    H=\bigcup_{n\ge0} (H\cap u_n0\Sigma^\omega) \cup 
       \bigcap_{n\ge0} (H\cap u_n1\Sigma^\omega)\ .
  \]
  Then any of the sets $H\cap u_n0\Sigma^\omega=H_n\cap
  u_n0\Sigma^\omega$ and $\bigcap (H\cap u_n1\Sigma^\omega)$ is a
  singleton, proving that $H$ is countable. Thus, there cannot be an
  uncountable $E_1$-homogeneous set.

  So let $H\subseteq L$ be infinite with $x\wedge y\ge_\pref y\wedge
  z$ for all $x<_\lex y<_\lex z$. Since we have only two letters, we
  get $x\wedge y>_\pref y\wedge z$ for all $x<_\lex y<_\lex z$ which
  allows to argue symmetrically to the above. Thus, indeed, there is
  no uncountable homogeneous set in~$L$.
\end{proof}

\begin{theorem}\label{T-S2}
  For all $k\ge3$, $\ell\ge2$, and $\lambda>\alephN$, we have
  $(\cont,\omega\mathsf{iA})\not\ramsey(\lambda,\oLANG)^k_\ell$.
\end{theorem}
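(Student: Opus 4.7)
The plan is to reduce the general case to Lemma~\ref{L-S2} by ``lifting'' its $(3,2)$-partition to a $(k,\ell)$-partition. Let $(V,E_1,E_2)$ with $V=\{0,1\}^\omega$ and injective $\omega$-automatic presentation $(\{0,1\}^\omega,\mathrm{id})$ be the counterexample from Lemma~\ref{L-S2}. I define a $(k,\ell)$-partition $(V,F_1,\dots,F_\ell)$ as follows: for $S\in[V]^k$, let $s_1<_\lex s_2<_\lex s_3$ be the three lex-smallest elements of $S$, and put $S\in F_1$ iff $\{s_1,s_2,s_3\}\in E_1$, $S\in F_2$ iff $\{s_1,s_2,s_3\}\in E_2$, and $F_3=\dots=F_\ell=\emptyset$. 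Since lex-comparison on $\{0,1\}^\omega$ is $\omega$-regular and $E_1,E_2$ are $\omega$-automatic, identifying the three lex-smallest coordinates of a $k$-tuple and testing membership in $E_i$ can be done by a $k$-dimensional B\"uchi automaton after branching over the $k!$ permutations of the input positions. Hence $(\{0,1\}^\omega,\mathrm{id})$ is an injective $\omega$-automatic presentation of a $(k,\ell)$-partition of size $\cont$.

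Since $\lambda>\alephN$ gives $\lambda\ge\aleph_1$, it suffices to rule out $F_i$-homogeneous sets of size $\aleph_1$. For $i\ge3$ this is immediate: $F_i=\emptyset$ forces such a set to have fewer than $k$ elements. Suppose instead $H\subseteq V$ is $F_i$-homogeneous with $i\in\{1,2\}$ and $|H|\ge\aleph_1$, and set
\[
 H_{\mathrm{top}}=\{h\in H\mid |H\cap(h,\infty)_\lex|<k-3\}.
\]
If $H_{\mathrm{top}}$ contained $k-2$ distinct elements $h_1<_\lex\dots<_\lex h_{k-2}$, the elements $h_2,\dots,h_{k-2}$ would witness $|H\cap(h_1,\infty)_\lex|\ge k-3$, contradicting $h_1\in H_{\mathrm{top}}$; so $|H_{\mathrm{top}}|\le k-3$ is finite. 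Let $H'=H\setminus H_{\mathrm{top}}$, which still has cardinality $\ge\aleph_1$. Given any $\{a,b,c\}\in[H']^3$ with $a<_\lex b<_\lex c$, the condition $c\notin H_{\mathrm{top}}$ lets us pick $d_1<_\lex\dots<_\lex d_{k-3}$ in $H\cap(c,\infty)_\lex$; then $\{a,b,c,d_1,\dots,d_{k-3}\}\in[H]^k\subseteq F_i$, and its three lex-smallest elements are exactly $\{a,b,c\}$, forcing $\{a,b,c\}\in E_i$. Thus $H'$ is an uncountable $E_i$-homogeneous set in $(V,E_1,E_2)$, contradicting Lemma~\ref{L-S2}.

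The only delicate step is the pruning to $H'$: it compensates for the fact that the ``take the three lex-smallest'' reduction only extracts a 3-subset from a $k$-subset of $H$ when the lex-maximum of that triple is witnessed by $k-3$ further elements of $H$ above it. Once the finitely many ``top'' exceptions are discarded, the $(3,2)$-obstruction of Lemma~\ref{L-S2} applies verbatim, so the remaining verifications, namely $\omega$-automaticity of the $F_i$ and the small counting bound on $|H_{\mathrm{top}}|$, are routine.
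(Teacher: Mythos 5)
Your proof is correct and follows essentially the same route as the paper: lift the $(3,2)$-counterexample of Lemma~\ref{L-S2} to a $(k,\ell)$-partition via the three lexicographically least elements, then recover an uncountable homogeneous set for the original partition. Your $H_{\mathrm{top}}$ pruning just makes explicit what the paper phrases as ``if necessary, throw away some lexicographically largest elements of $H'$''.
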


\begin{proof}
  Let $G$ be the $(3,2)$-partition from Lemma~\ref{L-S2} that does not
  have homogeneous sets of size~$\lambda$ and let $(L,\mathrm{id})$
  be an injective $\omega$-automatic presentation of~$G=(V,E_1,E_2)$
  (in particular, $V=L$).

  For a set $X\in[L]^k$, let $X_1<_\lex X_2<_\lex X_3$ be the three
  lexicographically least elements of~$X$. Then set
  $G'=(V,E_1',E_2',\dots,E_\ell')$ with
  \begin{align*}
    E_1'&=\{X\in [V]^k\mid \{X_1,X_2,X_3\}\in E_1\},\\
    E_2'&=\{X\in [V]^k\mid \{X_1,X_2,X_3\}\in E_2\},\text{ and }\\
    E_i'&=\emptyset\text{ for }3\le i\le\ell\ .
  \end{align*}
  Then $(L,\mathrm{id})$ is an injective $\omega$-automatic
  presentation of~$G'$. Now suppose $H'\subseteq L$ is homogeneous
  in~$G'$ and of size~$\lambda$. Then there exists $H\subseteq H'$ of
  size $\lambda$ such that for any words $x_1<_\lex x_2<_\lex x_3$
  from $H$, there exists $X\subseteq H'$ with $X_i=x_i$ for $1\le i\le
  3$ (if necessary, throw away some lexicographically largest elements
  of $H'$). Hence $H$ is homogeneous in $G$, contradicting
  Lemma~\ref{L-S2}.
\end{proof}

\subsection{Complexity of homogeneous sets in $\omega$-automatic
  $(2,\ell)$-partitions}

Having shown that $k=2$ is a central assumption in
Theorem~\ref{T-uncountable}, we now turn to the question whether
homogeneous sets of lower complexity can be found. 

\paragraph{Construction}
Let $V=L$ denote the regular $\omega$-language
$(1^+0^+)^\omega$. Furthermore, $E_1\subseteq[L]^2$ comprises all
2-sets $\{x,y\}\subseteq L$ such that $\supp(x)\cap\supp(y)$ is finite
or $x\sim_e y$. The set $E_2$ is the complement of $E_1$ in
$[L]^2$. This completes the construction of the $(2,2)$-partition
$G=(L,E_1,E_2)$.  Note that $(L,\mathrm{id}_L)$ is an injective
$\omega$-automatic presentation of~$G$. \medskip

By Theorem~\ref{T-uncountable}, $G$ has an $E_1$- or an
$E_2$-homogeneous set of size $\cont$.  We convince ourselves that $G$
has large homogeneous sets of both types. By
Lemma~\ref{L-almost-disjoint}, there is an $\omega$-language
$N\subseteq (1^+0^+)^\omega$ of size~$\cont$ such that the supports of
any two words from $N$ have finite intersection. Hence $[N]^2\subseteq
E_1$ and $N$ has size~$\cont$. But there is also an $E_2$-homogeneous
set $L_2$ of size $\cont$: Note that the words from $N$ are mutually
non-$\sim_e$-equivalent and let $L_2$ denote the set of all words
$1a_1 1a_21a_3\dots$ for $a_1a_2a_3\dots\in N$. Then for any $x,y\in
L_2$ distinct, we have $2\bN\subseteq\supp(x)\cap\supp(y)$ and
$x\not\sim_e y$, i.e., $\{x,y\}\in E_2$.

\begin{lemma}
  Let $H\in\LANG^*$ have size $\lambda>\alephN$. Then $H$ is not
  homogeneous in~$G$.
\end{lemma}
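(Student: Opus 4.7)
The plan is to use the $\LANG^*$-decomposition $H=\bigcup_{i=1}^n U_iV_i^\omega$ to isolate a single component of cardinality strictly greater than $\alephN$, and then exploit its combinatorial structure to produce explicit pairs of elements that violate \emph{both} possible types of homogeneity.

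First I would pick, by pigeonhole, an index $i$ with $|U_iV_i^\omega|>\alephN$. Since $U_i\subseteq\Sigma^*$ is countable, this forces $|V_i^\omega|\ge 2$. A standard fact from combinatorics on words then supplies two non-commuting elements $v,w\in V_i$: if all elements of $V_i$ pairwise commuted, they would all be powers of a common primitive word $r$, whence $V_i^\omega\subseteq\{r^\omega\}$, contradicting $|V_i^\omega|\ge 2$. Set $v_0:=vw$ and $v_1:=wv$; these are distinct words of equal length in $V_i^+$. Fix $u\in U_i$; for every $a=a_0a_1\cdots\in\{0,1\}^\omega$, the word $x_a:=u\,v_{a_0}v_{a_1}v_{a_2}\cdots$ lies in $U_iV_i^\omega\subseteq H\subseteq L=(1^+0^+)^\omega$. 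In particular $x_{0^\omega}=uv_0^\omega\in L$ has infinitely many $1$'s, so the block $v_0$ must carry at least one letter~$1$, i.e.\ $\supp(v_0)\neq\emptyset$.

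To rule out $E_2$-homogeneity, I would compare $x_{0^\omega}=uv_0v_0v_0\cdots$ with $x_{10^\omega}=uv_1v_0v_0\cdots$: the two are distinct but coincide from position $|u|+|v_0|$ onwards (here the equal-length property of $v_0,v_1$ is used), so $x_{0^\omega}\sim_e x_{10^\omega}$ and $\{x_{0^\omega},x_{10^\omega}\}\in E_1$. To rule out $E_1$-homogeneity, I would compare $x_{0^\omega}$ with $x_{(01)^\omega}=u(v_0v_1)^\omega$: picking any $p\in\supp(v_0)$, every position $|u|+2k|v_0|+p$ (for $k\ge0$) lies in $\supp(x_{0^\omega})\cap\supp(x_{(01)^\omega})$, giving an infinite intersection; and, picking a position $q<|v_0|$ at which $v_0$ and $v_1$ differ, the two words disagree at $|u|+(2k+1)|v_0|+q$ for every $k\ge 0$, so they are not $\sim_e$-related. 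Hence $\{x_{0^\omega},x_{(01)^\omega}\}\in E_2$.

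Combining the two witnesses shows that $H$ is neither $E_1$- nor $E_2$-homogeneous, as required. The only nontrivial ingredient is the passage from uncountable $V_i^\omega$ to the existence of non-commuting $v,w\in V_i$; I expect this to be the main (but routine) obstacle, since it reduces to the classical fact that two commuting words share a common root.
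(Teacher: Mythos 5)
Your proof is correct and takes essentially the same approach as the paper's: isolate an uncountable component $U_iV_i^\omega$, obtain two distinct equal-length words in $V_i^+$ (the paper asserts this directly from $|V_i^\omega|>\alephN$, you derive it via the commuting-words lemma -- same idea), and then exhibit the same two witness pairs, namely $uv_0^\omega\sim_e uv_1v_0^\omega$ for a pair in $E_1$ and $uv_0^\omega$ versus $u(v_0v_1)^\omega$ for a pair in $E_2$. No gaps.
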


\begin{proof}
  By definition of $\LANG^*$, there are languages $U_i,V_i\in\LANG$
  with $H=\bigcup_{1\le i\le n}U_iV_i^\omega$.

  Since $H$ is infinite, there are $1\le i\le n$ and $x,y\in
  U_iV_i^\omega$ distinct with $x\sim_e y$ and therefore $\{x,y\}\in
  E_1$.

  Since $|H|>\alephN$, there is $1\le i\le n$ with
  $|U_iV_i^\omega|>\alephN$; we set $U=U_i$ and $V=V_i$. From
  $|U|\le\alephN$, we obtain $|V^\omega|>\alephN$. Hence there are
  $v_1,v_2\in V^+$ distinct with $|v_1|=|v_2|$. Since $uv_1^\omega\in
  H$ and each element of $H$ contains infinitely many occurrences
  of~$1$, the word $v_1$ belongs to $\{0,1\}^*10^*$. Let $u\in U$ be
  arbitrary (such a word exists since $UV^\omega\neq\emptyset$) and
  consider the $\omega$-words $x'=u(v_1v_2)^\omega$ and
  $y'=u(v_1v_1)^\omega$ from $UV^\omega\subseteq H$. Then
  $x'\not\sim_e y'$ since $v_1\neq v_2$ and $|v_1|=|v_2|$. At the same
  time, $\supp(x')\cap\supp(y')$ is infinite since $v_1$ contains an
  occurrence of~$1$. Hence $\{x',y'\}\in E_2$.

  Thus, we found $\omega$-words $x,y,x',y'\in H$ with $\{x,y\}\in E_1$
  and $\{x',y'\}\notin E_1$ proving that $H$ is not homogeneous.
\end{proof}

Thus, we found a $(2,2)$-partition $G=(V,E_1,E_2)$ with $\cont$
elements and an injective $\omega$-automatic presentation $(L,h)$ such
that
\begin{enumerate}
\item $G$ has sets $L_1$ and $L_2$ in $\cerCF$ of size $\cont$ with
  $[L_i]^2\subseteq E_i$ for $1\le i\le 2$.
\item There is no $\omega$-language $H\in\LANG^*$ with $H\subseteq L$
  such that $h(H)$ is homogeneous of size $\cont$.
\end{enumerate}
Since all context-free $\omega$-languages belong to $\LANG^*$, the
following theorem follows the same way that Lemma~\ref{L-S2} implied
Theorem~\ref{T-S2}.

\begin{theorem}\label{T-complexity}
  For all $k,\ell\ge2$ and $\lambda>\alephN$, we have
  $(\cont,\omega\mathsf{iA})\not\ramsey(\lambda,\oCF)^k_\ell$ and
  $(\cont,\omega\mathsf{iA})\not\ramsey(\lambda,\omega\mathrm{REG})^k_\ell$.
\end{theorem}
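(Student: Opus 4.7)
The strategy is to reduce the general $(k,\ell)$-case to the $(2,2)$-partition $G=(V,E_1,E_2)$ built just above, following the pattern by which Theorem~\ref{T-S2} was derived from Lemma~\ref{L-S2}. The key point is that $\oCF\subseteq\LANG^*$ and $\oREG\subseteq\LANG^*$, so the preceding lemma already rules out $\oCF$- or $\oREG$-homogeneous sets of size $\lambda>\alephN$ in the base partition $G$; what remains is to propagate the obstruction to arbitrary $k,\ell\ge 2$.

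Given such $k,\ell$, I would define a $(k,\ell)$-partition $G'=(V,E_1',\dots,E_\ell')$ on the same universe $V=L$ by letting $X_1<_\lex X_2$ denote the two lexicographically smallest elements of a $k$-set $X\in[V]^k$ and setting $E_j'=\{X\in[V]^k\mid\{X_1,X_2\}\in E_j\}$ for $j\in\{1,2\}$ together with $E_i'=\emptyset$ for $3\le i\le\ell$. Since $<_\lex$ and the $E_j$ are $\omega$-automatic and first-order definitions preserve $\omega$-automaticity, $(L,\mathrm{id}_L)$ is an injective $\omega$-automatic presentation of $G'$, and $|V|=\cont$.

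Now suppose, towards a contradiction, that some $H'\subseteq L$ in $\oCF$ (resp.\ $\oREG$) of size $\lambda$ is homogeneous in $G'$. Since $E_i'=\emptyset$ for $i\ge 3$, $H'$ must be $E_j'$-homogeneous for some $j\in\{1,2\}$. A direct order-theoretic argument shows that the set $T\subseteq H'$ of those $x$ having fewer than $k-2$ strictly lex-larger partners in $H'$ has at most $k-2$ members, so $T$ is finite and $H:=H'\setminus T$ still has size $\lambda$. For any two $x_1<_\lex x_2$ in $H$ one can pick $k-2$ further elements $x_3,\dots,x_k\in H'$ strictly lex-above $x_2$ and form a $k$-set $X=\{x_1,\dots,x_k\}\subseteq H'$ whose two lex-smallest members are precisely $x_1$ and $x_2$; the $E_j'$-homogeneity of $H'$ then forces $\{x_1,x_2\}\in E_j$, so $H$ is $E_j$-homogeneous in the base partition $G$.

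To close the argument I need $H\in\LANG^*$. This follows from the closure of $\LANG^*$ under removal of finitely many $\omega$-words: from the structural identity $\Sigma^\omega\setminus\{w\}=(\bigcup_n\Sigma^n(\Sigma\setminus\{w_n\}))\cdot\Sigma^\omega$ one decomposes each $U_iV_i^\omega\setminus\{w\}$ along the first coordinate on which its members differ from $w$, and iterates over the at most $k-2$ elements of $T$. Since $|H|=\lambda>\alephN$ and $H\in\LANG^*$, the preceding lemma contradicts the $E_j$-homogeneity of $H$ in $G$. The verification of this closure property of $\LANG^*$ is the only ingredient not already present in the passage from Lemma~\ref{L-S2} to Theorem~\ref{T-S2}, and I expect it to be the one place in the plan that needs real care, even though it is elementary.
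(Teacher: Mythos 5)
Your proof is correct and takes essentially the same route as the paper, whose entire argument at this point consists of the observation that $\oCF$ and $\oREG$ are contained in $\LANG^*$ together with the instruction to repeat the reduction that derived Theorem~\ref{T-S2} from Lemma~\ref{L-S2}; you have simply written out that reduction for pairs, including the one point the paper leaves tacit but genuinely needs here (unlike in Theorem~\ref{T-S2}, where the obstruction was purely combinatorial), namely that $\LANG^*$ is closed under deleting the finitely many discarded top elements. For that closure step, rather than intersecting with $\bigl(\bigcup_{n}w[0,n)(\Sigma\setminus\{w[n,n+1)\})\bigr)\Sigma^\omega$, it is cleaner to note directly that $UV^\omega\setminus\{w\}=U'V^\omega$ where $U'=\{p\in UV^*\mid p\not\le_\pref w\}$, which lies in $\LANG$ since $\LANG$ contains all languages.
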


This result can be understood as another Sierpi\'nski theorem for
$\omega$-automatic $(k,\ell)$-partitions. This time, it holds for all
$k\ge2$ (not only for $k\ge3$ as Theorem~\ref{T-S2}). The price to be
paid for this is the restriction of homogeneous sets to ``simple''
ones. In particular the non-existence f regular homogeneous sets
provides a Sierpi\'nski theorem in the spirit of automatic structures.

\section*{Open questions}

Our positive result Theorem~\ref{T-uncountable} guarantees the
existence of some clique or anticlique of size~$\cont$ (and such a
clique or anticlique can even be constructed). But the following
situation is conceivable: the $\omega$-automatic graph contains large
cliques without containing large cliques that can be described by a
language from~$\cerCF$. In particular, it is not clear whether the
existence of a large clique is decidable.

A related question concerns Ramsey quantifiers. Rubin \cite{Rub08} has
shown that the set of nodes of an automatic graph whose neighbors
contain an infinite anticlique is regular (his result is much more
general, but this formulation suffices for our purpose). It is not
clear whether this also holds for $\omega$-automatic graphs. A
positive answer to this second question (assuming that it is
effective) would entail an affirmative answer to the decidability
question above.


\end{document}